\DeclareMathAlphabet{\pazocal}{OMS}{zplm}{m}{n}
\DeclareFontFamily{U}{mathb}{\hyphenchar\font45}
\DeclareFontShape{U}{mathb}{m}{n}{
      <5> <6> <7> <8> <9> <10> gen * mathb
      <10.95> mathb10 <12> <14.4> <17.28> <20.74> <24.88> mathb12
      }{}
\DeclareSymbolFont{mathb}{U}{mathb}{m}{n}
\DeclareFontFamily{U}{mathx}{\hyphenchar\font45}
\DeclareFontShape{U}{mathx}{m}{n}{
      <5> <6> <7> <8> <9> <10>
      <10.95> <12> <14.4> <17.28> <20.74> <24.88>
      mathx10
      }{}
\DeclareSymbolFont{mathx}{U}{mathx}{m}{n}
\DeclareMathDelimiter{\thickvert}{0}{mathb}{"7E}{mathx}{"1F}
\tikzstyle{doubled}=[line width=1.5pt] 
\tikzstyle{dot}=[inner sep=0mm,minimum width=2mm,minimum height=2mm,draw,shape=circle]  
\tikzstyle{ddot}=[inner sep=0mm, doubled, minimum width=2.5mm,minimum height=2.5mm,draw,shape=circle]
\tikzstyle{pdot}=[inner sep=0mm, doubled, minimum width=2.5mm,minimum height=2.5mm,shape=circle]
\tikzstyle{phase dimensions}=[minimum size=6mm,font=\footnotesize,inner sep=0.2mm,outer sep=-2mm]
\tikzstyle{phase dot}=[pdot,phase dimensions]
\tikzstyle{wphase dot}=[dot, phase dimensions]
\tikzstyle{hadamard}=[fill=white,draw,inner sep=0.6mm,font=\footnotesize,minimum height=4mm,minimum width=4mm]
\tikzstyle{anti} = [shade, bottom color=black, top color = white, draw, minimum height = 4mm, minimum width = 4mm]
\tikzstyle{triang}=[regular polygon,regular polygon sides=3,draw,scale=0.75,inner sep=-0.5pt,minimum width=9mm,fill=white,regular polygon rotate=180]
\tikzstyle{triangdag}=[regular polygon,regular polygon sides=3,draw,scale=0.75,inner sep=-0.5pt,minimum width=9mm,fill=white]
\newcommand{\boxshape}[3]{%
\pgfdeclareshape{#1}{
\inheritsavedanchors[from=rectangle] 
\inheritanchorborder[from=rectangle]
\inheritanchor[from=rectangle]{center}
\inheritanchor[from=rectangle]{north}
\inheritanchor[from=rectangle]{south}
\inheritanchor[from=rectangle]{west}
\inheritanchor[from=rectangle]{east}
\backgroundpath{
\southwest \pgf@xa=\pgf@x \pgf@ya=\pgf@y
\northeast \pgf@xb=\pgf@x \pgf@yb=\pgf@y

\@tempdima=#2
\@tempdimb=#3

\pgfpathmoveto{\pgfpoint{\pgf@xa - 5pt + \@tempdima}{\pgf@ya}}
\pgfpathlineto{\pgfpoint{\pgf@xa - 5pt - \@tempdima}{\pgf@yb}}
\pgfpathlineto{\pgfpoint{\pgf@xb + 5pt + \@tempdimb}{\pgf@yb}}
\pgfpathlineto{\pgfpoint{\pgf@xb + 5pt - \@tempdimb}{\pgf@ya}}
\pgfpathlineto{\pgfpoint{\pgf@xa - 5pt + \@tempdima}{\pgf@ya}}
\pgfpathclose
}
}}
\tikzstyle{map}=[draw,shape=SEbox,inner sep=2pt,minimum height=6mm,fill=white]
\tikzstyle{mapdag}=[draw,shape=NEbox,inner sep=2pt,minimum height=6mm,fill=white]
\tikzstyle{maptrans}=[draw,shape=NWbox,inner sep=2pt,minimum height=6mm,fill=white]
\tikzstyle{mapconj}=[draw,shape=SWbox,inner sep=2pt,minimum height=6mm,fill=white]
\tikzstyle{dmap}=[draw,doubled,shape=SEbox,inner sep=2pt,minimum height=6mm,fill=white]
\tikzstyle{dmapdag}=[draw,doubled,shape=NEbox,inner sep=2pt,minimum height=6mm,fill=white]
\tikzstyle{dmaptrans}=[draw,doubled,shape=NWbox,inner sep=2pt,minimum height=6mm,fill=white]
\tikzstyle{dmapconj}=[draw,doubled,shape=SWbox,inner sep=2pt,minimum height=6mm,fill=white]
\pgfmathsetmacro{\pgf@shorten@left}{\pgfkeysvalueof{/tikz/shorten left}}
\pgfmathsetmacro{\pgf@shorten@right}{\pgfkeysvalueof{/tikz/shorten right}}
\pgfmathsetmacro{\pgf@shorten@left}{\pgfkeysvalueof{/tikz/shorten left}}
\pgfmathsetmacro{\pgf@shorten@right}{\pgfkeysvalueof{/tikz/shorten right}}
\tikzstyle{kpoint common}=[draw,fill=white,inner sep=1pt,minimum height=4mm]
\tikzstyle{kpoint}=[shape=cornerpoint,shorten left=5pt,kpoint common]
\tikzstyle{kpoint adjoint}=[shape=cornercopoint,shorten left=5pt,kpoint common]
\tikzstyle{kpoint conjugate}=[shape=cornerpoint,shorten right=5pt,kpoint common]
\tikzstyle{kpoint transpose}=[shape=cornercopoint,shorten right=5pt,kpoint common]
\tikzstyle{kpointdag}=[kpoint adjoint]
\tikzstyle{kpointadj}=[kpoint adjoint]
\tikzstyle{kpointconj}=[kpoint conjugate]
\tikzstyle{kpointtrans}=[kpoint transpose]
\tikzstyle{big kpoint}=[kpoint, minimum width=1.2 cm, minimum height=8mm, inner sep=4pt, text depth=3mm]
 \tikzstyle{upground}=[circuit ee IEC,thick,ground,rotate=90,scale=1.5]
 \tikzstyle{downground}=[circuit ee IEC,thick,ground,rotate=-90,scale=1.5]
\tikzstyle{discarding}=[fill=white, draw=black, shape=circle, style=upground]
\tikzstyle{smalldiscarding}=[fill=white, draw=black, style=upground, scale=0.5]
\tikzstyle{backdiscard}=[fill=white, draw=black, shape=circle, style=downground, scale=0.5]
\tikzstyle{smallbackdiscard}=[fill=white, draw=black, shape=circle, style=downground, scale=0.5]
\tikzstyle{state}=[fill=white, draw=black, style=triang, tikzit shape=rectangle]
\tikzstyle{kstate}=[fill=white, draw=black, style=kpoint, tikzit shape=rectangle]
\tikzstyle{kstateconj}=[fill=white, draw=black, style=kpoint conjugate, tikzit shape=rectangle]
\tikzstyle{kstateBIG}=[fill=white, draw=black, style=big kpoint, tikzit shape=rectangle]
\tikzstyle{effect}=[fill=white, draw=black, style=triangdag]
\tikzstyle{keffect}=[fill=white, draw=black, style=kpoint adjoint]
\tikzstyle{keffectconj}=[fill=white, draw=black, style=kpoint transpose]
\tikzstyle{morphdag}=[style=mapdag]
\tikzstyle{morph}=[style=hadamard]
\tikzstyle{WIDEmorph}=[style=hadamard, minimum width=14mm]
\tikzstyle{morphtrans}=[style=maptrans]
\tikzstyle{morphconj}=[style=mapconj]
\tikzstyle{CPMmorph}=[style=dmap]
\tikzstyle{CPMmorphconj}=[style=dmapconj]
\tikzstyle{CPMmorphdag}=[style=dmapdag]
\tikzstyle{CPMmorphtrans}=[style=dmaptrans]
\tikzstyle{CPMstate}=[fill=white, draw=black, style=triang, doubled]
\tikzstyle{CPMstateBIG}=[fill=white, draw=black, style={triang_lesssep}, doubled]
\tikzstyle{CPMkstate}=[fill=white, draw=black, style=kpoint, tikzit shape=rectangle, doubled]
\tikzstyle{CPMkstateconj}=[fill=white, draw=black, style=kpoint conjugate, tikzit shape=rectangle, doubled]
\tikzstyle{CPMkstateBIG}=[fill=white, draw=black, style=big kpoint, tikzit shape=rectangle, doubled]
\tikzstyle{CPMkeffect}=[fill=white, draw=black, style=kpoint adjoint, doubled]
\tikzstyle{CPMkeffectconj}=[fill=white, draw=black, style=kpoint transpose, doubled]
\tikzstyle{UHfB}=[fill=white, draw=black, style=triangdag, doubled, inner sep=-2pt]
\tikzstyle{leak}=[style=tinypoint, regular polygon rotate=-90]
\tikzstyle{leakfill}=[style=tinypoint, regular polygon rotate=-90, fill=black]
\tikzstyle{Z}=[style=dot, fill=green]
\tikzstyle{X}=[style=dot, fill=red]
\tikzstyle{black_dot}=[style=dot, fill=black]
\tikzstyle{white_dot}=[style=dot, fill=white]
\tikzstyle{qblack_dot}=[style=ddot, fill=black]
\tikzstyle{qwhite_dot}=[style=ddot, fill=white]
\tikzstyle{whitephase}=[style=wphase dot, fill=white]
\tikzstyle{qredphase}=[style=phase dot, fill=red]
\tikzstyle{qgreenphase}=[style=phase dot, fill=green]
\tikzstyle{had}=[style=hadamard, doubled]
\tikzstyle{box}=[style=hadamard]
\tikzstyle{classhad}=[style=hadamard]
\tikzstyle{antipode}=[style=anti]
\tikzstyle{dottededge}=[-, dotted]
\tikzstyle{double edge}=[-, style=doubled, draw=black, tikzit draw={rgb,255: red,18; green,168; blue,191}]
\tikzstyle{new edge style 0}=[<-]
\tikzstyle{new edge style 1}=[-, draw={rgb,255: red,223; green,66; blue,126}, fill={rgb,255: red,223; green,66; blue,126}]
\tikzstyle{new edge style 2}=[-, draw={rgb,255: red,14; green,188; blue,83}]
\tikzstyle{new edge style 3}=[<-, draw={rgb,255: red,223; green,66; blue,126}]
\theoremstyle{definition}
\newtheorem{defn}{Definition}
\theoremstyle{plain}
\newtheorem{thm}{Theorem}
\theoremstyle{plain}
\theoremstyle{plain}
\title{
Causality in Higher Order Process Theories}
\author{Matt Wilson \institute{Department of Computer Science, University of Oxford, Wolfson Building, Parks Road, Oxford, UK} \institute{HKU-Oxford Joint Laboratory for Quantum Information and Computation} \email{matthew.wilson@cs.ox.ac.uk} \and Giulio Chiribella
\institute{QICI Quantum Information and Computation Initiative, Department of Computer Science, The University of Hong Kong}
\institute{Department of Computer Science, University of Oxford, Wolfson Building, Parks Road, Oxford, UK} 
\institute{HKU-Oxford Joint Laboratory for Quantum Information and Computation}
\institute{Perimeter Institute for Theoretical Physics, 31 Caroline Street North, Waterloo, Ontario, Canada}
\email{giulio.chiribella@cs.ox.ac.uk}}
\begin{document}

\maketitle

\begin{abstract}
Quantum supermaps provide a framework in which higher order  quantum processes can act on lower order  quantum processes. In doing so, they enable  the definition and analysis of new quantum protocols and causal structures. Recently, key features of quantum supermaps were captured through a general categorical framework, which 
led to a framework of higher order process theories  (HOPT)  \cite{workinp}.  The HOPT framework models   lower and higher order transformations in a single unified theory, with its mathematical structure shown to coincide with the notion of a closed symmetric monoidal category.
Here we provide an equivalent construction of the HOPT framework from four simple axioms of process-theoretic nature.  We then use the  HOPT framework to establish connections between foundational features such as causality, determinism and signalling, alongside exploring their interaction with the mathematical structure of  $*$-autonomy.
\end{abstract}

\section{Introduction}
Traditional theories of physics focus on the evolution of states by means of physical processes. More recently, however, there has been a growing interest in an extended class of theories, where processes can themselves evolve under a higher level type of operations called {\em  supermaps} \cite{Chiribella2008QuantumArchitectureb,Chiribella2008TransformingSupermapsb,chiribella2009theoretical,Chiribella2013QuantumStructure, chiribella2013normal, Perinotti2017CausalComputations, Bisio2019TheoreticalTheory}. In quantum information, the development of  supermaps  stimulated  the study of new protocols involving the  manipulation of quantum processes and quantum causal structures \cite{chiribella2008optimal,bisio2010optimal,bisio2010information,Ebler2018EnhancedOrder,Chiribella2021IndefiniteChannels, Kristjansson2020ResourceCommunication, Salek2018QuantumOrders, Chiribella2012PerfectStructures, Coecke2014AResources, Araujo2014ComputationalGatesb, Guerin2016ExponentialCommunication, Abbott2020CommunicationChannels, Wilson2020ASwitches, Chiribella2020QuantumOrders, Sazim2020ClassicalChannels, Procopio2019CommunicationScenario, Procopio2020SendingOrders, Zhao2020QuantumOrderb, Selby2020CompositionalCoherenceb,Liu2035OperationalChannels, Liu2019ResourceErasure, WechsQuantumOrder, Bavaresco2020StrictDiscrimination, Bavaresco2021UnitaryStrategies, Dong2021Success-or-Draw:Computation, Dong2021TheOperations, Yokojima2021ConsequencesSuperchannels, Quintino2019ReversingOperations, GourDynamicalResources, Gour2021EntanglementChannel}. 
 In addition, quantum supermaps serve as a lens through which one can examine the kinds of causal structures which are compatible with quantum theory~\cite{hardy2007towards,chiribella2009beyond,Oreshkov2012QuantumOrder,Chiribella2013QuantumStructure,Castro-RuizDynamicsStructures}. 

Given the usefulness of the supermap framework for quantum theory, it is natural to try and extend it to more general physical theories.  A powerful approach for capturing the structural aspects of  physical processes is the framework of   {\em process theories} \cite{Coecke2017PicturingReasoning}, which emerged from research in the field of categorical quantum mechanics \cite{Abramsky2004AProtocols, Abramsky2003PhysicalProcessing, Heunen2019CategoriesTheory, Coecke2010QuantumPicturalism, Coecke2017PicturingReasoning}.  In this framework,   the notions of sequential and parallel composition of processes are placed at the forefront by adopting   
 the mathematical structure of a {\em symmetric monoidal category (SMC)} \cite{Lane1971CategoriesMathematician}. 
 The process theoretic framework,  often aided by its easy-to-use graphical language \cite{Coecke2010QuantumPicturalism, Coecke2017PicturingReasoning}, has led to categorical formalisation of the notions of entanglement \cite{Coecke2010TheEntanglement}, phase \cite{Coecke2011PhaseQubits}, complementarity \cite{Coecke2009InteractingDiagrammatics, Gogioso2019ADynamics}, causal/temporal structure \cite{Coecke2012TimeTime,Coecke2013CausalProcesses, Kissinger2019AStructure, Pinzani2020GivingOrders, Selby2017AAdjoint, DBLP:journals/corr/abs-1704-08086}, information extraction \cite{Coecke2007QuantumSums, Selby2017Leaks:More}, postivity \cite{Selinger2007DaggerAbstract}, dynamics \cite{Gogioso2019ADynamics}, and memory \cite{Carette2021GraphicalMemory}, and the interactions between them \cite{Selby2018ReconstructingPostulates, Tull2020ATHEORY, GalleyATheory}. 
 
 
 In a recent work \cite{Kissinger2019AStructure}, Kissinger and Uijlen approached the study of supermaps in the process theoretic framework.  Specifically, they built supermaps respecting causality constraints by starting from compact closed categories with sufficiently well-behaved environment structures as ambient categories. The higher order theories resulting from this construction were  named {\em higher order causal categories  (HOCCs)}, and were shown to be a special subclass of $*$-autonomous categories.
  On the other hand, one may ask which mathematical structure captures \textit{precisely} the notion of a higher order physical theory,  independently of the specific properties of the ambient category from which the higher  order transformations might be built, and independently of causality constraints.   An answer to this question was proposed  by the authors of the present paper, who introduced a categorical notion of supermap and, iteratively building on it, the notion of a {\em higher order process theory  (HOPT)} \cite{workinp}.  Mathematically,   HOPTs were shown to   coincide with {\em closed symmetric monoidal categories (CSMCs)}. The HOPT framework  permits the study of higher order theories in their own right, without reference to additional structures inherited by their particular means of construction, and prior to introduction of any notion of causality.
   In general, HOPTs  provide a broad arena for  studying  the interplay between physical axioms, operational features, and categorical  structures. 
    
   In this paper we present a simpler characterisation of the HOPT framework,  showing that the structure of closed monoidal category can be derived from four basic axioms about higher order processes.  The axioms revolve around the idea that the processes of type $A\rightarrow B$ must be in one-to-one correspondence with states of a higher order object $A\Rightarrow B$.      We then introduce the notion of a {\em tight} HOPT, as a HOPT $\mathcal{C}$ in which all objects are interpretable as types of higher order transformations acting on  a basic, first-order  theory $\mathcal{C}_1$ (in other words,  the objects of $\cal C$ are well-formed expressions built by combining the objects of ${\cal C}_1$ with the binary operations $\otimes$ and $\Rightarrow$).  We then show how the framework can be used to reason about higher order theories by establishing structural relations  between determinism,   properties of correlations, causality, signalling, and $*$-autonomy.  Specifically, we demonstrate that
   \begin{itemize}
    \item if ${\cal C}_1$ is causal  and all  single-state objects in $\cal C$ have no correlations with other objects, then for every pair of  objects $A$ and $A'$ in  ${\cal C}_1$ and any arbitrary object $X$ in $\cal C$ the tensor product system $(A\Rightarrow A')\otimes X$ does not permit signalling from system $A$ to system $X$. In other words, discarding $A'$ completely blocks the flow of information from $A$ to $X$. This result reproduces a key finding of \cite{Kissinger2019AStructure} with only reference to basic operational principles. 
     \item if every  object $A$ in  ${\cal C}_1$ is equivalent to its double dual   $ (A\Rightarrow I)  \Rightarrow I $, and   the tensor product  preserves equivalence with double duals,    then the HOPT $\cal C$ is  $*$-autonomous. 
     \item      if ${\cal C}_1$ is causal, $\cal C$  is  $*$-autonomous,    and all single-state objects in $\cal C$ have no correlations with other objects,  then for every pair of  objects $A$ and $A'$ in  ${\cal C}_1$ and every arbitrary  object $X$  in $\cal C$  the tensor product  object  $(A \Rightarrow A')\otimes X$ does not permit signalling from  $A \Rightarrow A'$ to $X$. In other words, the choice of a supermap acting on $A \Rightarrow A'$ cannot affect the marginal state of  $X$.
\end{itemize}
We also prove that the first and third results in the above list hold in a more general setting, where the HOPT $\cal C$ is not required to be tight.   In that setting, causality of $\mathcal{C}_1$ is replaced by the  requirements that $\cal C$ is  deterministic ({\em i.e.} has a unique scalar) and that objects $A$ and $A'$ are causal ({\em i.e.} they have a unique discarding operation \cite{chiribella2010probabilistic,chiribella2011informational,Coecke2013CausalProcesses,coecke2016terminality}). 

A potential avenue for future research is to generalise the work of \cite{Kissinger2019AStructure} to generate interesting examples of HOPTs beyond higher order causal categories, for example by generalising constructions to infinite dimensional process theories \cite{Cockett2018DaggerMechanics, Cockett2021ExponentialComplementarity, Gogioso2017Infinite-dimensionalMechanics, Gogioso2018TowardsMechanics, Gogioso2019QuantumMechanics} and time symmetric operational theories \cite{Hardy2021TimeTheories}, alongside including sectorial restrictions \cite{Vanrietvelde2021RoutedCircuits}. Furthermore there are connections to be explored with frameworks for causal inferential theories \cite{SchmidUnscramblingTheories}, string diagrams with open holes \cite{Roman2020CombFeedback, Roman2020OpenCalculus}, and extensions to the notion of a lambda calculus to quantum settings \cite{SelingerQuantumCalculus, Selinger2004TowardsLanguage, VanTonder2004AComputation, Zorzi2016OnPerspective}.
\section{Higher order process theories}  
\subsection{Introduction to higher order transformations}
In quantum theory,  
  deterministic state transformations are represented by quantum channels, that is, completely positive, trace-preserving linear maps acting on density matrices  \cite{Wilde2013QuantumTheory}.  In turn, quantum supermaps \cite{Chiribella2008QuantumArchitectureb,Chiribella2008TransformingSupermapsb,chiribella2009theoretical,Chiribella2013QuantumStructure, chiribella2013normal, Bisio2019TheoreticalTheory} describe deterministic transformations of quantum channels, and they are represented by linear maps on a suitable vector space of maps.
This notion of a higher order transformation acting on lower order transformations can be iterated indefinitely to construct an infinite hierarchy of  transformations of increasing complexity \cite{chiribella2009theoretical, Perinotti2017CausalComputations, Bisio2019TheoreticalTheory}.

In \cite{Kissinger2019AStructure}, Kissinger and Uijlen extended  the construction of quantum supermaps  to a large class of physical theories. Specifically, they provided a way to build a higher order theory $\mathbf{Caus}[\mathcal{P}]$ 
by imposing a causality axiom on a raw-material category $\mathcal{P}$, assumed to be compact closed.  The result of this construction was named as a ``Higher Order Causal Category" (HOCC), and was shown to be a special type of  $*$-autonomous category.
More recently, a broad categorical framework for theories of supermaps was introduced in \cite{workinp}, where we introduced the notion of Higher Order Process Theory (HOPT). HOPTs were shown to be mathematically equivalent to closed symmetric monoidal categories \cite{Lane1971CategoriesMathematician}, an important class of categories that contains $*$-autonomous categories (and so HOCCs) as a special case.


 Let us start with an informal summary of   the framework of  \cite{workinp}. Following \cite{Coecke2017PicturingReasoning}, in this framework a standard physical theory is modelled as a symmetric monoidal category (SMC) ${\cal C}_1$, with physical systems represented by objects and physical processes represented by morphisms between objects. When objects form a set we denote that set by $o({\cal C}_1)$ and for each pair of objects $A,B$ we denote the set of morphisms from $A$ to $B$ in $\mathcal{C}_1$ by $\mathcal{C}_1(A,B)$.  Symmetric monoidal structure of a theory ensures that it comes with a notion of parallel composition for objects and morphisms, represented by the symbol $\otimes$. Each SMC also comes equipped with a notion of empty space $I$ such that $A \otimes I$ is equivalent to $A$, the states of an object $B$ are then considered to be morphisms of the form $f:I \rightarrow B$. The category of supermaps over ${\cal C}_1$ is then another symmetric monoidal category $\cal C$, with the property that every pair of objects $A , B$  in  ${\cal C}_1$  is associated to an object type $A\Rightarrow B$ in $\cal C$ representing morphisms from $A$ to $B$ in ${\cal C}_1$, every process $f \in {\cal C}_1  (A,B)$   is then uniquely associated to a  state $\hat f \in {\cal C} (I  ,  A \Rightarrow  B)$.   We refer to the morphism $f \in  {\cal C}_1  (A,B)$ as a \textit{dynamic} process,  and to the state $\hat f  \in {\cal C}  (I,A\Rightarrow B)$ as the \textit{static} version of process $f$. Supermaps are considered to be the morphisms of $\cal C$, as a result they act on object types such as $A \Rightarrow B$. Axioms are given for two separate tensor products, one denoted $\otimes$ in which bipartite processes can have their parts plugged together in sequence or in parallel, and another denoted $\boxtimes$ which models the largest imaginable way to combine objects. We will see that the former product $\otimes$ is an abstract model for the non-signalling tensor product of \cite{Kissinger2019AStructure}. The latter product $\boxtimes$ on the other hand is analogous to the ``par" $\mathbf{\&}$ of \cite{Kissinger2019AStructure}. This manuscript will only be concerned with the former product.
 
 A theory $\mathcal{C}$ equipped with just the former product $\otimes$ contains its own supermaps if the above story holds with ${\cal C}_1=  {\cal C}$. Moreover,  the  lower and higher order levels within $\mathcal{C}$ are {\em linked}  if each object $A$ is isomorphic to the  object   $I\Rightarrow A$,  representing the processes from the unit object $I$ into $A$. When this condition is satisfied, $\mathcal{C}$ is called a HOPT.  Mathematically, HOPTs  can be characterised as closed symmetric monoidal categories (CSMCs) \cite{workinp}.  
 
 In the following subsection, we provide an alternative characterisation of HOPTs/CSMCs in terms of four simple axioms of process-theoretic nature. 

\subsection{Four axioms for higher order process theories}

Ref. \cite{workinp} argued that the appropriate mathematical structure for describing higher order process theories is the structure of a CSMC.  
In process-theoretic terms, CSMCs can be defined as follows: 
\begin{defn}
A CSMC $\mathcal{C}$ is an SMC  in which, 
 for every pair of objects $A$,$B$  in $\mathcal{C}$,  there exists an object $A \Rightarrow B$ in $\mathcal{C}$ and a morphism $\epsilon_{A \Rightarrow B}: (A \Rightarrow B) \otimes A \rightarrow B$ such that for every morphism $f:(C \otimes A) \rightarrow B$ there exists a unique morphism $\bar{f}:C \rightarrow (A \Rightarrow B)$ satisfying \begin{equation}\tikzfig{figs/evaldef} \quad = \quad \tikzfig{figs/evaldef_2}\end{equation} 
\end{defn}

The process $\bar{f}$ will often be referred to as the \textit{curried} version of the process $f$.  Currying is the key notion of a CSMC:  for each process $f: C \otimes A \rightarrow B$ there is a process $\bar{f}: C \rightarrow (A \Rightarrow B)$ which takes $C$ as an input and inserts it into the left hand input of $f$. 

Closed monoidal structure is powerful, but it is  unclear whether the existence of the curried version of each process should considered be a fundamental principle. Instead of assuming closed monoidal structure from the outset, we present four basic operational axioms that pin down the structure of a CSMC, and derive currying as a consequence.   The intention of the axioms is to capture the notion of a theory in which each process exists both in a static form, manipulable by higher order transformations within the same theory, as well as in a dynamic form in which such a process may be interpreted as actually \textit{happening} to a system.  

The axioms are imposed on a given process theory, mathematically described by an SMC $\cal C$.   Informally, the  axioms are as follows:  \begin{itemize}
    \item {\bf Axiom 1.}  For every pair of objects $A,B$ there exists an object $A\Rightarrow B$ such that for each process $f:A \rightarrow B$, there exists a unique state $\hat{f}:   I\rightarrow (A\Rightarrow B)$.
    \item  {\bf Axiom 2.}   There  exists  a higher order transformation which uses the static process $\hat f$ as a resource for implementing the dynamic process $f$. 
    \item {\bf Axiom 3.}   There exist higher order transformations which plug static processes together in sequence or in parallel. 
    \item {\bf Axiom 4.}  
    Every  state $\rho: I \rightarrow B$ is equivalent to its static representation $\hat{\rho}:  I  \rightarrow (I \Rightarrow B)$.
\end{itemize}
We now formally phrase the above axioms   in the language of process theories. Axiom 1 is already expressed formally. 
To formalise Axiom 2,  we introduce the notion of ``insertion of a process'': 
\begin{defn}
For a generic pair of objects $A$ and $B$, an {\em insertion} is  a process $\epsilon_{A,B}: (A \Rightarrow B) \otimes A \rightarrow B$ such that 
\begin{equation}
  \tikzfig{figs/eval_straight}
 \end{equation}
  for every $f:A \rightarrow B$.
\end{defn}
From here on  we will adopt the following notation \begin{equation}\tikzfig{figs/eval_notate}\end{equation}

Given any process  $f: C \rightarrow (A \Rightarrow B)$ that produces a {static} process in output, the insertion $\epsilon_{A , B}$ can be applied to the static output to make a new process.
Explicitly, the new process is obtained by applying the function \begin{equation}E_{A , B}^{C} : \mathcal{C}(C,A \Rightarrow B) \rightarrow \mathcal{C}(C \otimes A,B)\end{equation} defined by \begin{equation} E_{A , B}^{C}:: \quad  \textrm{  } \tikzfig{figs/eval_new_2} \end{equation}   
We say that  $\epsilon_{A , B}$ is \textit{completely injective} if the function $E_{A ,B}^{C}$ is injective  for every $C$. 

Physically,  since $\epsilon$ is  interpreted as usage of a process, it is natural to require that $\epsilon$ be completely injective.  The formal statement of Axiom 2 is that there exist a completely injective insertion $\epsilon_{A, B}$ for every pair of objects $A, B  \in  o ({\cal C})$.    Axioms 1 and 2 together imply that there is a bijective correspondence between the set of processes ${\cal C} (A,B)$ and the set of states ${\cal C}  (I, A\Rightarrow B)$.   Note that, however, there is an operational difference between static and dynamic processes:   a static process is   a resource for generating the corresponding dynamic process,  but the converse may not be true in general. 


Axiom 3 demands that sequential and parallel composition appear as higher order  processes that can be applied to static processes.  This idea is captured by the following definition: 
\begin{defn}
Let  $\mathcal{C}$ be a process theory equipped with a completely injective insertion $\epsilon_{A , B}$ for each pair of objects $A,B$, we say that $\cal C$ has \textit{basic manipulations} if for every triple $(A,B,C)$ and quadruple $(A,A',B,B')$ there exists processes $\circ_{ABC}$ and $\otimes_{AA'BB'}$ denoted \begin{equation}\tikzfig{figs/comp_pair}\end{equation} such that the following equations hold \begin{equation}
  \tikzfig{figs/sigmaevalnew} \quad \quad \quad \quad \tikzfig{figs/parplace2new_a}
 \end{equation}
\end{defn}
The above equations are reminiscent of equations used in causal inferential theories to derive composed states of knowledge from states of knowledge about individual processes \cite{SchmidUnscramblingTheories}. By inserting static processes and using the definitions of $\epsilon$, $\circ$ and $\otimes$ it is shown in Appendix A that $\circ$ and $\otimes$ implement sequential and parallel composition of static processes, respectively. We will from here on adopt a special notation for the static representation of the identity $\hat{id_{A}}: I \rightarrow (A \Rightarrow A)$: \begin{equation}
    \tikzfig{figs/identity_notation}
\end{equation}

Finally,  Axiom 4   postulates an equivalence between each object $A$ and the corresponding  object  $(I \Rightarrow A)$.    Formally we require the insertion $\epsilon_{I,A}: (I \Rightarrow A) \otimes I  \rightarrow A$  to be an   isomorphism for every object $A \in  o(\cal C)$. In string diagram language this is phrased by asking for a process $\eta_{A}$ such that 
\begin{equation}
    \tikzfig{figs/eval_well_point}
\end{equation}
We will in general adopt an aesthetic convention of notating with small boxes or circles those processes which are canonical, in other words, those who's existence follows from the axioms of a higher order process theory alone. 

\begin{defn}[Higher order process theory]
A {\em higher order process theory (HOPT)} is an SMC  $\mathcal{C}$ equipped with a completely injective insertion $\epsilon_{A , B}$ for every pair of objects $A$, $B$ such that
\begin{itemize}
    \item $\mathcal{C}$ has basic manipulations
    \item For each $A$ the map $\epsilon_{I ,A}$ is an isomorphism
\end{itemize}
\end{defn}
As it  turns out, these conditions are equivalent to providing a closed symmetric monoidal structure:  
\begin{thm}[HOPTs =  CSMCs]
An SMC  $\mathcal{C}$  is a HOPT  if and only if $\mathcal{C}$ is a CSMC. 
\end{thm}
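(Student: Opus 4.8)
The plan is to prove the two implications separately, with essentially all of the work concentrated in showing that a HOPT carries a closed structure. For the direction CSMC $\Rightarrow$ HOPT I would simply read off the required data from the closed structure. The counit $\epsilon_{A,B}\colon (A\Rightarrow B)\otimes A \to B$ is an insertion: applying the universal property to a morphism $f\colon A\to B$, viewed as a map out of $I\otimes A$, produces the state $\hat f := \bar f$, and the defining currying equation is exactly the insertion equation. Complete injectivity of $\epsilon_{A,B}$ is then immediate, being precisely the uniqueness half of the universal property applied to the assignment $g\mapsto \epsilon_{A,B}\circ(g\otimes\id{A}) = E^{C}_{A,B}(g)$. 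The basic manipulations $\circ_{ABC}$ and $\otimes_{AA'BB'}$ are defined as the canonical internal-composition and internal-tensor morphisms obtained by currying the evident evaluation composites, and their defining equations follow from the universal property together with naturality. Finally $\epsilon_{I,A}$ is the standard isomorphism $I\Rightarrow A\cong A$. All of these steps are routine.

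The substantive direction is HOPT $\Rightarrow$ CSMC, where for each $f\colon C\otimes A\to B$ I must produce a morphism $\bar f\colon C\to(A\Rightarrow B)$ with $\epsilon_{A,B}\circ(\bar f\otimes\id{A})=f$ and show it is unique. Uniqueness is free, since it is exactly complete injectivity of $\epsilon_{A,B}$ from Axiom 2; hence the entire content is the \emph{existence} of $\bar f$. I would first record the bijection $\mathcal{C}(X,Y)\cong\mathcal{C}(I,X\Rightarrow Y)$ guaranteed by Axioms 1 and 2, together with a small auxiliary lemma that $\Rightarrow$ transports along isomorphisms of either argument: given an isomorphism $\phi\colon X\to X'$, feeding its static form $\hat\phi$ into $\circ$ yields morphisms $(X'\Rightarrow Y)\to(X\Rightarrow Y)$ and back that are mutually inverse by the defining equation of $\circ$, so that the coherence isomorphisms of the SMC can be pushed through $\Rightarrow$.

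To build $\bar f$ I would use the linking isomorphism of Axiom 4 to present the dynamic input $C$ as a static process of type $I\Rightarrow C$ via $\eta_C$; tensoring this with $\widehat{\id{A}}$ and applying the parallel-composition supermap $\otimes_{A,I,A,C}$ produces (after transporting $A\otimes I\cong A$ and $A\otimes C\cong C\otimes A$ through $\Rightarrow$) a static process of type $A\Rightarrow(C\otimes A)$; composing this with $\hat f$ through the sequential-composition supermap $\circ_{A,\,C\otimes A,\,B}$ lands in $A\Rightarrow B$. The composite of these morphisms is the candidate $\bar f\colon C\to(A\Rightarrow B)$. Morally it encodes the statement that plugging a state of $C$ into $\bar f$ returns the static version of $f$ with that state inserted into its $C$-slot, performed uniformly on the open wire $C$ rather than on any fixed state.

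The main obstacle is verifying $\epsilon_{A,B}\circ(\bar f\otimes\id{A})=f$. Since $\mathcal{C}$ need not be well-pointed, this cannot be checked by testing on states and must instead be carried out as a direct morphism-level computation. One unfolds $\bar f$ and applies, in turn, the defining equation of $\circ$ (which rewrites an insertion of a $\circ$-composed static process as the two insertions performed in sequence), the defining equation of $\otimes$ (likewise for parallel composition, as established in Appendix A), the insertion law for $\widehat{\id{A}}$, and the linking identity that $\epsilon_{I,C}$ and $\eta_C$ are mutually inverse; throughout, the SMC coherence isomorphisms transported through $\Rightarrow$ must be tracked carefully, and this bookkeeping is the delicate part. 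Once the insertion equation is established, complete injectivity promotes $\bar f$ to the unique morphism with this property, yielding the currying universal property and hence the full CSMC structure.
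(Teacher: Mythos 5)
Your proposal is correct and follows essentially the same route as the paper: your candidate $\bar f$ (staticise the $C$ wire via $\eta_C$, parallel-compose with $\widehat{\id{A}}$ using the $\otimes$ supermap, then sequentially compose with $\hat f$ using $\circ$) is exactly the paper's ``partial insertion'' $\Delta(f) = \Delta \circ (\hat f \otimes \id{C})$, verified with the same defining equations and with uniqueness likewise extracted from complete injectivity, and your CSMC $\Rightarrow$ HOPT direction matches the paper's appeal to currying the evident composites and the unitor. The only cosmetic difference is that the paper assumes strictness ``for readability'' where you explicitly track the coherence isomorphisms transported through $\Rightarrow$.
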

\begin{proof}
Given in Appendix B. The key idea is that the curried version of a generic process can be constructed from its static version using the inverse of the insertion $\epsilon_{I,A}$ along with basic manipulations. 
\end{proof}

\subsection{Tight higher order process theories}
In a generic HOPT, there is no explicit distinction between higher levels and lower levels. In particular, there is no specification of a first-order  physical theory $\mathcal{C}_1$ on which the higher order processes of  $\mathcal{C}$ are based.
We now add such a specification by requiring the existence  of a first-order theory $\mathcal{C}_1$ inside of $\mathcal{C}$, such that all of the processes in $\mathcal{C}$ can be interpreted as manipulations of processes built from $\mathcal{C}_1$. The definition presented here is a special case of a more general notion of a higher order theory $\mathcal{C}$ containing a first-order theory $\mathcal{C}_1$ introduced in \cite{workinp}.

A {\em full sub-process theory} $\mathcal{S}$ of a process theory $\cal C$ is a symmetric monoidal subcategory 
of $\mathcal{C}$ such that for any pair of objects $A,B$ in $\mathcal{S}$ the processes from $A$ to $B$ in $\mathcal{S}$ are \textit{all} of the processes from $A$ to $B$ in $\mathcal{C}$.  
\begin{defn}
A \textit{tight} HOPT is a pair   $(\mathcal{C}, {\cal C}_1)$ where 
\begin{itemize}
    \item $\mathcal{C}$ is a HOPT and $\mathcal{C}_1$ is a full sub-process theory of $\mathcal{C}$
    \item The objects of $\mathcal{C}$ are generated by combining the objects of $\mathcal{C}_1$ with the binary operations $\otimes$ and $\Rightarrow$, that is, they are given by the algebra $ o(\mathcal{C}_1) \, \thickvert \, \otimes_{\mathcal{C}}  \, \thickvert \,  \Rightarrow_{\mathcal{C}}$.
\end{itemize}
\end{defn}

For a tight  HOPT $(\mathcal{C},  {\cal C}_1)$, we will see in section $5$ that the closed monoidal structure imposes constraints that are strong enough to allow a lifting of  certain properties from the objects of   $\mathcal{C}_1$  to all objects in ${\cal C}$.

\section{String diagram toolbox}
We now develop a graphical representation of some basic notions in higher order physics, such as the notions  of combs and acyclic causal structures.

\paragraph*{Combs}  Quantum combs~\cite{Chiribella2008QuantumArchitectureb,chiribella2009theoretical} represent quantum circuits with a set of open holes in which quantum channels can be inserted. In the categorical framework, the canonical morphisms of a HOPT $\mathcal{C}$ give formal meaning to such circuits of the theory with open holes: in the HOPT framework, a comb is simply represented by a special type of  morphism in  $\mathcal{C}$. For example, a comb with a single hole for a process of type $A\to A'$ (left-hand side of the following diagram)  is represented by  a morphism containing an insertion of the static type $A\Rightarrow A'$  (right-hand side of the following diagram) 
\begin{equation}
    \tikzfig{figs/evalcomb_a_bold}
 \end{equation}
The sign $\cong$ denotes a correspondence between an informal picture on the left hand side and a morphism used to represent it on the right hand side.     

Every comb defines a supermap, whereby the processes inserted in the empty holes are transformed into  new processes. In the static picture,  the action of this  supermap  is  generated by the basic operations of parallel and sequential composition. For instance,  the supermap $s_{f,g}$  corresponding to the comb in the above diagram can be decomposed as 
\begin{equation}
  \tikzfig{figs/sigmafunctor}
 \end{equation}

\paragraph*{Acyclic causal structures}
The canonical processes of any HOPT are sufficient to define the insertion of processes into the vertices of any arbitrary directed acyclic graph. This scenario can be represented by formal diagrams in the HOPT, which  may prove useful for reasoning about information theoretic protocols involving the  agents who perform operations at the nodes of a network.
 \begin{defn}[Circuit skeleton]
A circuit skeleton in a tight HOPT $\mathcal{C}$ is a circuit built only from insertion processes $\epsilon_{A , B}$ with $A,B \in o(\mathcal{C}_{1})$.
\end{defn}
An example of a circuit skeleton is the following process in which thin wires are used to represent objects of $\mathcal{C}_1$:
\begin{equation}
  \tikzfig{figs/causalskeleton_a}
 \end{equation}

When $\mathcal{C}_1$ is causal a circuit skeleton can be interpreted as a raw causal structure of nodes into which physical processes can be freely inserted.     Note that, more broadly,  circuit skeletons could also be used in general non-tight HOPTs by allowing insertion processes $\epsilon_{A,B}$ with arbitrary systems $A, B \in \cal C$.   

\paragraph*{Dualising processes}
Intuitively, it should be possible to view a state of object $A$ as an ``effect on the effects on $A$,'' that is, as a transformation that maps effects in $A\Rightarrow I$ into scalars (i.e.~elements of $I$).    In a  CSMC, the embedding of $A$ into $(A\Rightarrow I)  \Rightarrow I$  is implemented by a process  
\begin{equation}d_A: A~ \rightarrow ~\big[(A \Rightarrow I) \Rightarrow I\big] \end{equation} 
uniquely defined by the following condition 
 \begin{equation}\tikzfig{figs/staraut_a}\end{equation}
  The existence of the unique morphism $d_{A}$ is well-known, and a proof is given in Appendix C. 

In the following, we will call $d_A$  the \textit{dualising process} for system $A$. It is natural to require that the dualising process maps distinct states of $A$ into distinct states of  $(A\Rightarrow I)\Rightarrow I$.   If this injectivity property holds for every  object $A\in  o({\cal C})$, we say that the HOPT $\cal C$ has {\em injective dualisation}. An example of a HOPT with  injective dualisation is a  theory  with ``enough effects,''  in the following sense:  
\begin{defn}[Enough effects]
A HOPT    $\mathcal{C}$ has   {\em enough effects}  if for every object  $A\in  o({\cal C})$ and for every pair of states $\rho  , \sigma  \in  {\cal C}  (I,  A)$,  the condition $\forall e  \in  {\cal C}  (A, I):  ~ \textrm{ }  e  \circ   \rho  =  e\circ \sigma  \, , $ implies $\rho  =  \sigma$. \end{defn} A proof that enough effects imply injective dualisation is given in Appendix D. Whilst in general the dualising process $d_A$ may not be an isomorphism, if every $d_A$ is indeed an isomorphism, then $\cal C$ is $*$-autonomous:  
\begin{defn}[$*$-autonomous category with global dualising object $I$]
A closed symmetric monoidal category  $\mathcal{C}$ is  {\em $*$-autonomous with global dualising object $I$} if $d_A$ is an isomorphism for every $A\in  o(\cal C)$. \end{defn}    
In fact the above is a special case of the more refined notion of an \texttt{ISOMIX} \cite{Cockett1997PROOFCATEGORIES} category. Later in this paper we will discuss the relation between the special case of $*$-autonomous HOPTs, and the HOCCs of Ref. \cite{Kissinger2019AStructure}.  

\paragraph*{Lifting processes on states to processes on effects}
In a CSMC, it is  possible to show that each state of a system  $A \Rightarrow B$  can be converted into a state of the system $(B \Rightarrow I) \Rightarrow (A \Rightarrow I)$ representing a process from $(B \Rightarrow I)$ to $(A \Rightarrow I)$. The conversion \begin{equation}T_{AB}:\big(A \Rightarrow B\big) \rightarrow  \big[ (B \Rightarrow I) \Rightarrow (A \Rightarrow I)\big]\end{equation} termed the \textit{lifting process} is defined by the following condition \begin{equation}\tikzfig{figs/adjoint_a}\end{equation}
The existence of the lifting process $T_{AB}$ is proven in the Appendix C.
\paragraph*{Static currying}
Every state of type $C \Rightarrow (A \Rightarrow B)$ defines a process $C \rightarrow (A \Rightarrow B)$, which in turn defines a process of type $(C \otimes A) \rightarrow B$ and so a state of type $(C \otimes A) \Rightarrow B$. The correspondence between states of $C \Rightarrow (A \Rightarrow B)$ and $(C \otimes A) \Rightarrow B$ is clearly one-to-one.  Furthermore, it is possible to show that this correspondence is implemented by an isomorphism  \begin{equation}\phi: \big[ C \Rightarrow (A \Rightarrow B)\big] ~\rightarrow ~ \big[(C \otimes A) \Rightarrow B\big]\end{equation} defined by \begin{equation}\tikzfig{figs/phidefined_a}\end{equation}  A short diagrammatic proof that  $\phi$ is an isomorphism is provided in Appendix E.  Alternatively, the isomorphism property of $\phi$ can be derived from the Yoneda lemma.



\section{Causality in higher order process theories}
We now introduce causality into the picture.
In a probabilistic setting, the causality axiom states that the probability of outcomes obtained at a certain step of a circuit cannot depend on the choice of operations performed at later steps  \cite{chiribella2010probabilistic,chiribella2011informational,Chiribella2016QuantumPrinciples}. This axiom is equivalent to the condition that there exists a unique deterministic effect, this unique effect is typically written with the following ``ground" symbol: \begin{equation}\tikzfig{figs/discard}\end{equation}In the categorical  setting, if one restricts their attention to the category of deterministic processes, causality is the statement that the monoidal unit $I$ is terminal \cite{Coecke2013CausalProcesses,coecke2016terminality}.

\subsection{Causality and determinism}
To formulate causality in a HOPT,  it is convenient to first define the notion of determinism. 
  In a deterministic theory, there should only be one scalar, which represents certainty. 
\begin{defn}[Deterministic process theory]
A process theory $\mathcal{C}$ is {\em deterministic} if it contains  only one scalar, that is, if $|{\cal C} (I,I)|=1$.  The unique scalar in a deterministic theory is denoted by $1$.
\end{defn}
HOCCs provide an instance of deterministic HOPTs.
  
\begin{defn}[Causal object/theory]
An object  $A$   is  {\em causal}  if it has only one effect, that is, if $|{\cal C}  (A,I)|  = 1$.  A process theory  $\cal C$ is {\em causal} if all the objects $A\in  o (\cal C)$ are causal.
\end{defn}

  Note that every causal theory is automatically deterministic.   In the higher order setting,  it is interesting to study tight HOPTs $(\mathcal{C},\mathcal{C}_1)$ in which the first-order theory ${\cal C}_1$ is causal. In this case, it is immediate to see that  $\cal C$ is  deterministic. Notice that, however, it does not make much sense to study the scenario in which an entire theory $\cal C$ is  causal, because any such theory is trivial under the reasonable assumption that the dualisations are injective: 
\begin{thm}
A HOPT $\cal C$ with injective dualisation  is causal if and only if it is trivial, that is, if and only if $|{\cal C}  (A,B)|=1$ for all objects $A,  B  \in  o(\cal C)$.  
\end{thm}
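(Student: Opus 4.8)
The plan is to prove this by showing that causality forces every object to be \emph{trivial} in the sense of having a unique state, and then leverage injective dualisation to collapse all hom-sets to singletons. The key observation is that causality of $\mathcal{C}$ means every object $A$ has $|\mathcal{C}(A,I)| = 1$; in particular $I$ itself is causal, so $\mathcal{C}$ is deterministic and $|\mathcal{C}(I,I)| = 1$. This already handles scalars, and the remaining task is to propagate this triviality to arbitrary hom-sets $\mathcal{C}(A,B)$.

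First I would reduce arbitrary morphisms to states. Using the closed structure (Theorem~1 guarantees $\mathcal{C}$ is a CSMC, and Axioms~1--2 give the bijection between $\mathcal{C}(A,B)$ and states $\mathcal{C}(I, A\Rightarrow B)$), it suffices to show that every object has a unique state, i.e.\ $|\mathcal{C}(I,X)| = 1$ for all $X$. Indeed, once $|\mathcal{C}(I, A\Rightarrow B)| = 1$ for all $A,B$, the bijection of Axioms~1--2 forces $|\mathcal{C}(A,B)| = 1$. So the real content is: causality $+$ injective dualisation $\Rightarrow$ every object has exactly one state.

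Next I would exploit injective dualisation directly at the level of states. The dualising process $d_X : X \to (X\Rightarrow I)\Rightarrow I$ sends distinct states of $X$ to distinct states of $(X\Rightarrow I)\Rightarrow I$. But by the defining condition of $d_X$ (the \texttt{figs/staraut\_a} identity), a state $\rho$ post-composed with $d_X$ and then fed an effect $e \in \mathcal{C}(X,I)$ recovers the scalar $e\circ\rho$. Since $X$ is causal, there is a \emph{unique} effect $e$ on $X$, and since $\mathcal{C}$ is deterministic this scalar is always the unique scalar $1$. Thus for any two states $\rho,\sigma \in \mathcal{C}(I,X)$, every effect agrees on them (there being only one effect, giving the same scalar on both), so the ``enough effects''-style separation collapses: the images $d_X\circ\rho$ and $d_X\circ\sigma$ are determined entirely by how they pair with the unique effect, hence coincide. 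By injectivity of $d_X$, this forces $\rho = \sigma$.

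The main obstacle I anticipate is making the pairing argument in the previous paragraph fully rigorous at the higher-order level: a state of $(X\Rightarrow I)\Rightarrow I$ is itself a higher-order object, and I must argue that two such states are equal when they agree on all states of $X\Rightarrow I$, not merely on the single effect of $X$. The clean route is to note that causality of $\mathcal{C}$ applies to the object $X\Rightarrow I$ as well, so it too has a unique effect, and the defining diagram for $d_X$ expresses $d_X\circ\rho$ as the higher-order process whose action on a state $\hat e$ of $X\Rightarrow I$ yields $e\circ\rho = 1$; since this holds for the unique such $\hat e$ and yields the unique scalar regardless of $\rho$, all the higher-order states $d_X\circ\rho$ agree, and injectivity finishes the argument. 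The converse direction (trivial $\Rightarrow$ causal) is immediate, since $|\mathcal{C}(A,I)| = 1$ is a special case of triviality.
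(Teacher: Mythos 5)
Your overall architecture matches the paper's: you reduce the converse direction to showing every object has a unique state (via the Axiom 1--2 bijection $\mathcal{C}(A,B)\cong\mathcal{C}(I,A\Rightarrow B)$), and the heart of the matter is proving $d_X\circ\rho = d_X\circ\sigma$. But your argument for that equality has a genuine gap, one you half-notice and then do not actually close. Both your first attempt and your ``clean route'' infer $d_X\circ\rho = d_X\circ\sigma$ from the fact that these two states of $(X\Rightarrow I)\Rightarrow I$ give the same scalar when paired (via the insertion $\epsilon$) with every state $\hat{e}$ of $X\Rightarrow I$. That inference is a tomography principle: it asserts that states of $(X\Rightarrow I)\Rightarrow I$ are separated by states of $X\Rightarrow I$, i.e.\ an ``enough states'' property for the higher-order object $X\Rightarrow I$. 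Nothing in the hypotheses grants this: the theorem assumes only injective dualisation, and a HOPT need not be tomographically well-behaved. Concretely, pairing a state $u: I \to (X\Rightarrow I)\Rightarrow I$ with states of $X \Rightarrow I$ only probes the composites $E(u)\circ\hat{e}$, where $E(u) = \epsilon\circ(u\otimes \mathrm{id})$ is the effect on $X\Rightarrow I$ corresponding to $u$; agreement of these composites for all $\hat{e}$ does not give $E(u)=E(v)$ in general, so it does not give $u = v$.

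The fix is short, and it is exactly what the paper does --- indeed you already place the needed fact on the table when you note that causality applies to the object $X\Rightarrow I$ itself, but you then fail to use it. Do not pair with states at all: $d_X\circ\rho$ and $d_X\circ\sigma$ are states of $(X\Rightarrow I)\Rightarrow I$, and by Axioms 1--2 the states of $(X\Rightarrow I)\Rightarrow I$ are in bijective correspondence with the effects $(X\Rightarrow I)\to I$ (complete injectivity of the insertion makes $u\mapsto E(u)$ injective, and Axiom 1 makes it surjective). Causality of $\mathcal{C}$ says $|\mathcal{C}(X\Rightarrow I, I)| = 1$, so $(X\Rightarrow I)\Rightarrow I$ has exactly one state; in particular $d_X\circ\rho = d_X\circ\sigma$, and injective dualisation gives $\rho = \sigma$. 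This counting argument needs no extensionality or tomography. The rest of your proof --- the reduction of arbitrary hom-sets to state spaces, and the trivial-implies-causal direction --- is correct as written and agrees with the paper.
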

 \begin{proof} If  $|{\cal C}  (A,B)|=1$ for every pair of objects $A,  B$, then $\cal C$ is trivially causal.  Conversely, assume that  $\cal C$ is a causal HOPT.  Then, for a generic object $A\in o ({\cal C})$,  pick two generic states $\rho, 
 \sigma \in  {\cal C}  (I,  A)$, and consider the states  $d_A  \circ \rho$ and $d_A  \circ \sigma$ of $(A\Rightarrow I)  \Rightarrow I$.   These states are in one-to-one correspondence with effects on system $A\Rightarrow I$.   Since the theory is causal, system $A\Rightarrow I$ has only one effect, and therefore we must have  $d_A  \circ \rho  = d_A  \circ \sigma$. Since the  dualisation $d_A$ is injective, we have $\rho  =  \sigma$.  Hence, we conclude that  system $A$ has only one state.  More generally, for a generic pair of objects $A,B \in  o ({\cal C})$, the  morphisms  of type $A \rightarrow B$ are in one-to-one correspondence with the states of  $A\Rightarrow B$, and therefore one has $|{\cal C}  (A,B)|=1$.   \end{proof}

Note that the above theorem holds in particular when the category $\cal C$ is $*$-autonomous with $I$ the global dualizing object. In summary, the relevant scenario for causality in HOPTs is the one in which a sub-theory ${\cal C}_1$ is causal, while the entirety of $\cal C$ is only deterministic. We conclude the section by showing that, if $\cal C$ is deterministic, a simple sufficient condition for an object to be causal is that it has ``enough states,'' in the following sense:  
\begin{defn}[Enough states]
An object $A$ has enough states if for every object $X$ and for every pair of processes $f,g:A \rightarrow X$ 
\begin{equation}
    f = g \iff \forall  \rho  \in  {\cal C}  (I, A): \textrm{ } f\circ \rho = g \circ \rho
\end{equation}
\end{defn}
In the axiomatic framework of \cite{Chiribella2016QuantumPrinciples,mauro2017quantum}, this property can be shown to follow from the condition of local distinguishability, also known as local tomography \cite{araki1980characterization,wootters1990local,Hardy2001QuantumAxioms,dariano2006how,barnum2007generalized,Barrett2007InformationTheories,chiribella2010probabilistic,hardy2011reformulating}.

In any deterministic HOPT if an object $A$ has enough states then it must be causal, i.e. there can be only one effect $A \rightarrow I$. Any two effects $e_1,  e_2  \in  {\cal C}  (A,I)$ satisfy the condition $  e_1  \circ \rho   =  1   =  e_2\circ \rho $ for every state $\rho \in  {\cal C}  (I,A)$, and therefore the ``enough states'' condition implies $e_1  = e_2$.

\subsection{The no-signalling tensor product}\label{subsec:nosigtens}  

An important insight of Ref. \cite{Kissinger2019AStructure} is that the tensor product in a higher order causal category does not allow for signalling between tensor factors of process types between causal objects.  More specifically, Ref. \cite{Kissinger2019AStructure} showed that for any first-order objects $A,B,A',B'$ of a HOCC the type $(A \Rightarrow A') \otimes (B \Rightarrow B')$ represents the space of non-signalling channels, for which the output $A'$ has no dependence on the input $B$ and the output $B'$ has no dependence on the input $A$. This notion can be expressed in the language of HOPTs whenever each of $A,B,A',B'$ has a unique effect: a state $ f : I  \rightarrow (A \Rightarrow A') \otimes (B \Rightarrow B') $ represents a non-signalling channel if  there exist (dynamic) processes $f_A:  A\rightarrow A'$ and $f_B:  B \rightarrow B'$ satisfying: \begin{equation}\tikzfig{figs/beamer_nonsig_B} \quad \quad \quad \tikzfig{figs/beamer_nonsig_A}\end{equation}   
An interesting question is whether the above no-signalling property of the tensor product in a HOCC can be derived through operational principles imposed on a general HOPT.

We now introduce a condition that implies this no-signalling property of the tensor product. The condition is that objects with a single state cannot form non-separable joint states with other objects. Intuitively,  if  a  joint state of objects $X$ and $Y$ is interpreted as representing correlations between the states of $X$ and $Y$, it should not be possible to correlate any auxiliary object $X$ with a single-state object $Y$. This intuition motivates the following definition: 
\begin{defn}[No correlation with a single-state object]
A process theory $\mathcal{C}$ has {\em no correlations with single-state objects} if,  for any object $Y$ with $|\mathcal{C}(I,Y)| = 1$ and any object $X\in  o(\cal C)$, every state $\rho: I \rightarrow X \otimes Y$ is of the product form  $\rho = \rho' \otimes \pi$ with $\rho' \in \mathcal{C}(I,X)$ and $\pi \in \mathcal{C}(I,Y)$
\begin{equation}
  \tikzfig{figs/trivialdegree}
 \end{equation}
\end{defn}

The above condition is satisfied by all HOCCs as defined in \cite{Kissinger2019AStructure}:
\begin{thm}
Every HOCC is a HOPT with no correlations with single-state objects.
\end{thm}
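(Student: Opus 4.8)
The plan is to prove the two halves of the statement separately. For the claim that a HOCC is a HOPT, I would chain together results already in place: by \cite{Kissinger2019AStructure} a HOCC is $*$-autonomous, every $*$-autonomous category is in particular closed symmetric monoidal, and Theorem 1 (HOPTs $=$ CSMCs) identifies CSMCs with HOPTs. Concretely, in a closed category currying is a bijection $\mathcal{C}(C \otimes A, B) \cong \mathcal{C}(C, A \Rightarrow B)$ whose inverse is exactly $E^{C}_{A,B}$, so the evaluation is a completely injective insertion; basic manipulations are obtained from the closed structure (the content of Appendix B); and the canonical isomorphism $I \Rightarrow A \cong A$ makes $\epsilon_{I,A}$ invertible. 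Hence this half is essentially a bookkeeping exercise, and the real content is the no-correlation property.

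For the no-correlation property I would descend into the explicit construction $\mathbf{Caus}[\mathcal{P}]$. Recall that its objects are pairs $(A, c_A)$ with $c_A \subseteq \mathcal{P}(I, A)$ a set of causal states closed under double dualisation, $c_A = c_A^{**}$, where $c^{*}$ denotes the set of effects pairing to the unique scalar on every element of $c$; the (no-signalling) tensor is $c_{X \otimes Y} = (c_X \otimes c_Y)^{**}$, the double-dual closure of the set of product states. Fix $Y$ with a single state, so $c_Y = \{\pi\}$. Then the generating set is $\{\rho' \otimes \pi : \rho' \in c_X\}$, and since every element of this set already has the product form demanded by the definition of no correlations with single-state objects, the whole statement reduces to the single equality $c_{X \otimes Y} = \{\rho' \otimes \pi : \rho' \in c_X\}$.

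The inclusion $\supseteq$ is immediate. For $\subseteq$ I would take $\omega \in (c_X \otimes \{\pi\})^{**}$ and show it is flat along $Y$, i.e.\ a product with $\pi$. The engine is that the single point $\{\pi\}$ forces the dual set $(c_X \otimes \{\pi\})^{*}$ to be large: it contains every effect whose behaviour on the $X$-factor is correct against $c_X$, irrespective of how it acts on $Y$ (in the compact-closed model these are exactly the effects $M$ whose partial trace over $Y$ is proportional to the discard on $X$). Pairing $\omega$ to the unique scalar against all of these, and taking differences, kills every component of $\omega$ that is not constant in the $Y$-direction, so $\omega$ must equal $\rho' \otimes \pi$ with $\rho'$ the induced state of $X$; the closure condition on $c_X$ then certifies $\rho' \in c_X$. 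This is the abstract shadow of the finite-dimensional computation in which an $\omega$ satisfying $\mathrm{tr}[M\omega] = 1$ for all $M$ of fixed $X$-marginal is forced to be proportional to $R \otimes \mathbbm{1}_Y$.

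The main obstacle is precisely this reverse inclusion, and the difficulty is twofold. First, $X$ is an arbitrary, possibly higher-order object of $\mathbf{Caus}[\mathcal{P}]$, so there is in general no canonical discarding effect on $X$ and one cannot form a naive marginal; the argument must run entirely through the dual characterisation of the closure. Second, one must genuinely use that the product in play is the no-signalling tensor $\otimes$ rather than the par $\boxtimes$, since it is only for $\otimes$ that a single-state factor collapses the double-dual closure back to the product set, and making this collapse rigorous in a general compact-closed $\mathcal{P}$ with environment structure is where the full weak-closure machinery of \cite{Kissinger2019AStructure} must be invoked.
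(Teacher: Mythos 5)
Your first half (HOCC $\Rightarrow$ $*$-autonomous $\Rightarrow$ CSMC $\Rightarrow$ HOPT) is fine, and so is your reduction of the second half to the single equality $(c_X \otimes \{\pi\})^{**} = \{\rho' \otimes \pi : \rho' \in c_X\}$, with the inclusion $\supseteq$ trivial. But the reverse inclusion is the entire content of the theorem, and your proposal does not prove it. The engine you offer --- ``pairing $\omega$ against all of these, and taking differences, kills every component of $\omega$ that is not constant in the $Y$-direction'' --- is a linear-algebra argument: it needs subtraction of morphisms and a decomposition of $\omega$ into components. A precausal category $\mathcal{P}$ is not assumed to be additively (let alone subtractively) enriched, so this step is not merely informal; it is unavailable at the stated level of generality. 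Even in the finite-dimensional model your sketch silently requires $c_X^*$ to span the effect space in order to pass from ``every $c_X^*$-marginal of $\omega$ equals $\pi$'' to the product form, and you never establish this. Ending with ``the full weak-closure machinery of \cite{Kissinger2019AStructure} must be invoked'' names the right paper but not the right statement, and the proof stops exactly where the difficulty begins.

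What actually closes the argument --- and what the paper's Appendix F does --- is a combination of two concrete HOCC facts that never appear in your proposal. First, every object of $\mathbf{Caus}[\mathcal{P}]$ is \emph{flat} (a clause you dropped when recalling the definition of the objects): the unique state $\pi$ of $Y$ is an invertible scalar multiple of the maximally mixed state, and $C_X^*$ contains an invertible multiple $\alpha$ of the discarding effect. Second, the precausal axiom that effects dual to the set of \emph{all first-order causal processes} factorise as (causal state) $\otimes$ (discard) --- this is Lemma 6.1 of \cite{Kissinger2019AStructure}, of which the present theorem is explicitly a ``minor generalisation''. With flatness one shows that every suitably normalised causal process $\frac{\mu}{\alpha}\Psi$ lies in $(C_X \times C_Y)^*$; dualising gives $(C_X \times C_Y)^{**} \subseteq \{\frac{\mu}{\alpha}\Psi \mid \Psi \ \textrm{causal}\}^*$; the factorisation axiom then decomposes every element of the latter set as a product, and one finally checks, by applying an effect on $Y$, that the left factor is a genuine state of $\mathbf{X}$. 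Ironically, the set you describe in your parenthetical remark --- effects whose $Y$-marginal is proportional to discarding on $X$ --- is precisely the set of bent, normalised causal processes used in the first step; the missing idea is to feed that set into the factorisation axiom rather than into a difference argument that the framework cannot support.
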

\begin{proof}
A minor generalisation of lemma 6.1 of \cite{Kissinger2019AStructure}, given for completeness in Appendix F. 
\end{proof}
The condition of ``no correlation with single-state objects"   was crucial  to proving that $(A \Rightarrow A') \otimes (B \Rightarrow B')$ represents a non-signalling channel in \cite{Kissinger2019AStructure}. In that context, the statement followed from a specific decomposition of supermaps, as  open circuits of causal processes.    Here, instead, we take the ``no correlation with single-state objects" as a basic operational condition. 

We now show that, if there is no correlation with single-state objects, then the tensor product has a no-signalling property. For a given process, non-signalling is defined as follows:
\begin{defn}[Non-signalling process]
A process  $m:   A \rightarrow  A' \otimes X $ in a deterministic process theory  is {\em non-signalling}  from $A$ to $X$  if for every effect $\pi_{A'}: A' \rightarrow I$ there exists an effect $\pi_{A}:A \rightarrow I$ and a state $\rho  :   I\rightarrow X$ such that \begin{equation}\tikzfig{figs/definition_non_sig}\end{equation}
\end{defn}
The definition expresses the idea that when $A'$ is discarded (in any way) no signal may reach $X$ from $A$.  Note that, in principle, the definition still allows for a notion of signalling from $A'$ to $X$, because in general the state $f'$ of $X$ could depend on the effect  $\pi_{A'}$ used for discarding. Note, however, that signalling from $A'$ to $X$ is not possible if system $A'$ is causal, because in that case the effect $\pi_{A'}$ is unique. In the following, we will restrict our attention to the case where both systems  $A'$ and $A$ are causal.


\begin{thm}[Non-signalling processes]
Let $\cal C$ be  a  deterministic HOPT with no correlations with single-state objects,  $A,A'$ be two causal objects in $\cal C$, and  $X\in  o(\cal C)$ be an arbitrary object.   Then, for every state  $f: I \rightarrow   (A \Rightarrow A') \otimes X$ the process $m$ defined by: \begin{equation}\tikzfig{figs/sigdotbox}\end{equation}  is non-signalling from $A$ to $X$. 
\end{thm}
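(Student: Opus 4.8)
The plan is to reduce the statement to a single equation and then establish it by routing the discarding of $A'$ through the single-state object $A\Rightarrow I$. First, since $A'$ is causal it has a unique effect $\pi_{A'}$, so the definition of non-signalling collapses to exhibiting one effect $\pi_A:A\to I$ and one state $\rho:I\to X$ with $(\pi_{A'}\otimes\id{X})\circ m=\rho\circ\pi_A$. Unpacking the diagram for $m$, it feeds the external input $A$ together with the $(A\Rightarrow A')$-leg of $f$ into the insertion $\epsilon_{A,A'}$ and passes the $X$-leg of $f$ through unchanged; hence the morphism I must control is $n:=(\pi_{A'}\otimes\id{X})\circ m$, which is the state $f$ with its $(A\Rightarrow A')$-leg and the external input $A$ plugged into the effect $e:=\pi_{A'}\circ\epsilon_{A,A'}:(A\Rightarrow A')\otimes A\to I$.

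The key step is to recognise that $e$ factors through the static type $A\Rightarrow I$. Since $\mathcal{C}$ is a CSMC (Theorem~1), I curry the $A$-input of $e$ to obtain a canonical process $\bar e:(A\Rightarrow A')\to(A\Rightarrow I)$ --- intuitively the ``compose with the discarding effect'' supermap --- satisfying, by the defining universal property of currying, the \emph{exact} identity $\epsilon_{A,I}\circ(\bar e\otimes\id{A})=e$. Here it is essential to use currying rather than a check on states, so that no tomography on $A\Rightarrow A'$ is needed. Crucially, because $A$ is causal it has a unique effect, so by the bijection between $\mathcal{C}(A,I)$ and $\mathcal{C}(I,A\Rightarrow I)$ the object $A\Rightarrow I$ has exactly one state; that is, $A\Rightarrow I$ is a single-state object. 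I then form the state $g:=(\bar e\otimes\id{X})\circ f:I\to(A\Rightarrow I)\otimes X$, pushing the $(A\Rightarrow A')$-leg of $f$ through $\bar e$ before touching the input.

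Now the hypothesis of no correlations with single-state objects applies to $g$: since $A\Rightarrow I$ is single-state, $g$ must be of product form $g=\hat\pi\otimes\rho$, where $\hat\pi$ is the unique state of $A\Rightarrow I$ and $\rho\in\mathcal{C}(I,X)$. This unique state $\hat\pi$ is precisely the static version of the unique effect $\pi_A$ of $A$. Substituting $e=\epsilon_{A,I}\circ(\bar e\otimes\id{A})$ and then $g=\hat\pi\otimes\rho$ back into $n$ (routine wire-bookkeeping moves $\bar e$ onto the $(A\Rightarrow A')$-leg of $f$), the insertion absorbs $\hat\pi$ together with the external input via the defining insertion property $\epsilon_{A,I}\circ(\hat\pi\otimes\id{A})=\pi_A$, while the $X$-output is the disconnected state $\rho$. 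Hence $n=\rho\circ\pi_A$, which is exactly the non-signalling condition.

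The main obstacle is the second step: one cannot apply the no-correlation hypothesis to $f$ directly, because $A\Rightarrow A'$ is in general a multi-state object, its states being all the processes $A\to A'$. The insight that unlocks the proof is that discarding $A'$ lets one replace the leg $A\Rightarrow A'$ by $A\Rightarrow I$ --- a single-state object whenever $A$ is causal --- through the curried discarding map $\bar e$, and that this replacement is reversible in the precise sense that $e$ factors exactly as $\epsilon_{A,I}\circ(\bar e\otimes\id{A})$. Within this scheme, causality of $A'$ is used only to pin down the discarding effect $\pi_{A'}$, whereas causality of $A$ is what makes $A\Rightarrow I$ single-state and identifies the recovered effect with the unique discarding effect $\pi_A$ of $A$.
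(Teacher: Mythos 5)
Your proof is correct and takes essentially the same route as the paper's: both pull the unique discarding effect on $A'$ through so that the $(A\Rightarrow A')$-leg of $f$ becomes a state of $A\Rightarrow I$, observe that $A\Rightarrow I$ is a single-state object precisely because $A$ is causal and $\mathcal{C}(I,A\Rightarrow I)\cong\mathcal{C}(A,I)$, apply the no-correlations-with-single-state-objects hypothesis there, and re-insert the resulting unique static discard to obtain the product form $\pi_A\otimes\rho$. The only cosmetic difference is that you build the ``compose-with-discard'' map $\bar e$ by currying $\pi_{A'}\circ\epsilon_{A,A'}$, whereas the paper realises the same morphism via its canonical basic manipulations; by uniqueness of currying these coincide.
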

\begin{proof}
As in \cite{Kissinger2019AStructure}, the core of the proof is the  ``no correlation with single-state objects"  property.    In the proof, this property is applied to  the object  $A \Rightarrow I$, which is a single-state object because $Hom(I,A \Rightarrow I) \cong Hom(I \otimes A,I) \cong Hom(A,I) \implies |Hom(I,A \Rightarrow I)| = 1$. The discarding effect can as a result be pulled through the entire process 
\begin{equation}
  \tikzfig{figs/provenonsig1_a}
 \end{equation}
The composition of $\hat f$ with the unique discarding effect on $A'$ at the bottom of the diagram gives a state of type $(A\Rightarrow I) \otimes X$, and so  ``no-correlation with single-state objects'' implies that such a state separates as the unique discarding state on $(A\Rightarrow I)$ and a state $f'$ on $X$:
 \begin{equation}
  \tikzfig{figs/provenonsig2_a}
 \end{equation}
\end{proof}
The above immediately entails the fact that states of type $f : I \rightarrow (A \Rightarrow A') \otimes (B \Rightarrow B')$ represent non-signalling channels (when $A,A',B,B'$ are causal) in the sense of \cite{Kissinger2019AStructure}, since for such a state $f$ then \begin{equation}
  \tikzfig{figs/beamer_proof}
 \end{equation}
The broad takeaway is that it is the causality of an object $A$ that prevents it from signalling to another object that it is in parallel with.

\subsection{Tensor product processes vs bipartite processes}  

For arbitrary objects $A,A',B,B'$, there is a parallel composition process from the tensor product object $(A\Rightarrow A') \otimes (B\Rightarrow B')$ to the space of bipartite processes $(A\otimes B)  \Rightarrow  (A'\otimes B')$.     
But can this morphism be an isomorphism? In other words, can the tensor product of processes of type $A\rightarrow A'$ and processes of type $B\rightarrow B'$  yield the full set of processes of type $(A\otimes B) \rightarrow  (A' \otimes B')$? Here we show that the answer is negative when $A' =  B$ and $B'  =  A$, since in this case the existence of a $\texttt{SWAP}$ process can be leveraged.
\begin{thm}
Let $\mathcal{C}$ be a deterministic HOPT with no interaction with single-state objects. If $A$ and $B$ are causal and 
\begin{flalign}
\tikzfig{figs/tensor} : (A \Rightarrow B) \otimes (B \Rightarrow A) \rightarrow (A \otimes B) \Rightarrow (B \otimes A)
 \end{flalign}
is an isomorphism, then $A$ and $B$ are single-state objects. 
\end{thm}
\begin{proof}
Given in Appendix G. The key idea is that the set of processes from $A\otimes B$ to $B\otimes A$ contains the swap of objects $A$ and $B$, and requiring the swap to be no-signalling implies that $A$ and $B$ have only one state each.
\end{proof}

\section{The emergence of $*$-autonomy}
An important difference between the HOPTs studied in this paper and the HOCCs of  \cite{Kissinger2019AStructure} is that the latter are not just closed monoidal, but also  $*$-autonomous, since they are equipped  with isomorphisms of the form $((A \Rightarrow I) \Rightarrow I) \cong A$ for every object $A$.  Here we explore the lifting of $*$-autonomy from lower to higher orders by showing that for a tight HOPT $(\mathcal{C},\mathcal{C}_{1})$, the property of $*$-autonomy can be lifted from the first-order theory $\mathcal{C}_{1}$ to the entire higher order theory $\cal C$ whenever the tensor product is sufficiently well behaved.  

\begin{defn}[Equivalence of double duals]
An object $A$ in a HOPT $\mathcal{C}$ is {\em canonically equivalent to its double dual} if $d_{A} :A ~\rightarrow ~ \big [(A \Rightarrow I) \Rightarrow I\big]$ is an isomorphism.
\end{defn}
Such an isomorphism forces states on $A$ to be nothing other than the effects on effects for $A$, as is the case in finite dimensional quantum systems.   This equivalence can be expressed more generally as  a symmetry between the dynamics on states and the dynamics on effects, such as the symmetry between the Schr\"odinger picture and the Heisenberg picture in quantum theory.  
\begin{defn}[Adjoint dynamics]
A HOPT $\mathcal{C}$ has {\em adjoint dynamics} between $A$ and $B$ if the morphism $T_{AB}:(A \Rightarrow B)~ \rightarrow ~  \big [(B \Rightarrow I) \Rightarrow (A \Rightarrow I)\big]$ is an isomorphism.
\end{defn}
Adjoint dynamics expresses the condition that the processes that may be applied to states are precisely those that may be applied to effects.
\begin{thm}
Let $\mathcal{C}$ be a HOPT, the following statements are equivalent. 
\begin{itemize}
    \item For all $A,B \in o(\mathcal{C})$ the HOPT $\mathcal{C}$ has adjoint dynamics between $A$ and $B$
    \item For all $B \in o(\mathcal{C})$ the HOPT $\mathcal{C}$ has adjoint dynamics between $I$ and $B$
    \item Every $B \in o(\mathcal{C})$ is canonically equivalent to its double dual in $\mathcal{C}$
\end{itemize}
\end{thm}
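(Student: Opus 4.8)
The plan is to establish the three conditions equivalent by running the cycle (1)$\Rightarrow$(2)$\Rightarrow$(3)$\Rightarrow$(1), writing $A^{*}$ for the dual object $A\Rightarrow I$ and $A^{**}$ for $(A\Rightarrow I)\Rightarrow I$ throughout. The implication (1)$\Rightarrow$(2) is immediate: adjoint dynamics between $I$ and $B$ is literally the instance $A=I$ of adjoint dynamics between $A$ and $B$, so the assertion that every $T_{AB}$ is an isomorphism contains the assertion that every $T_{IB}$ is.

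For (2)$\Rightarrow$(3) I would identify $d_B$ with $T_{IB}$ up to canonical isomorphisms. In a HOPT the insertion $\epsilon_{I,B}\colon(I\Rightarrow B)\to B$ is an isomorphism, and so is $\epsilon_{I,I}\colon(I\Rightarrow I)\to I$; the latter induces, by covariant functoriality of $\Rightarrow$ in its right argument, an isomorphism $\theta_B\colon[(B\Rightarrow I)\Rightarrow(I\Rightarrow I)]\to[(B\Rightarrow I)\Rightarrow I]$. I claim that $d_B=\theta_B\circ T_{IB}\circ\epsilon_{I,B}^{-1}$. Both are morphisms $B\to[(B\Rightarrow I)\Rightarrow I]$, and the cleanest way to see they agree is to check that the right-hand composite satisfies the defining condition of $d_B$: after inserting an arbitrary effect $e$ of type $B\Rightarrow I$ and evaluating, both sides return the scalar $e\circ\rho$ obtained by feeding a state $\rho$ of $B$ into $e$. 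By the uniqueness of $d_B$ (Appendix C) equality follows, so $d_B$ is an isomorphism exactly when $T_{IB}$ is; this step is reversible and in fact yields (2)$\Leftrightarrow$(3).

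The substantial implication is (3)$\Rightarrow$(1), whose engine is the \emph{double-transpose identity}
\[
T_{B^{*}A^{*}}\circ T_{AB}\;=\;\kappa_{AB},
\]
where $\kappa_{AB}\colon(A\Rightarrow B)\to(A^{**}\Rightarrow B^{**})$ is the image of the pair $(d_A,d_B)$ under the internal-hom bifunctor (contravariant in the first slot, covariant in the second); once $d_A$ is invertible this is pre-composition with $d_A^{-1}$ and post-composition with $d_B$, and in general it is the map fitting into the naturality square of $d$. I would prove this identity diagrammatically, reducing it to naturality of the dualising unit $d$ together with the defining equation of $T$ and the basic manipulations. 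Granting it, under hypothesis (3) all of $d_A,d_B,d_{A^{*}},d_{B^{*}}$ are isomorphisms, so both $\kappa_{AB}$ and its instance $\kappa_{B^{*}A^{*}}=T_{A^{**}B^{**}}\circ T_{B^{*}A^{*}}$ are isomorphisms. From invertibility of $\kappa_{AB}$ the map $T_{AB}$ has left inverse $\kappa_{AB}^{-1}\circ T_{B^{*}A^{*}}$ and $T_{B^{*}A^{*}}$ has right inverse $T_{AB}\circ\kappa_{AB}^{-1}$; from invertibility of $\kappa_{B^{*}A^{*}}$ the map $T_{B^{*}A^{*}}$ also has a left inverse. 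A morphism that is simultaneously split monic and split epic is an isomorphism, so $T_{B^{*}A^{*}}$ is invertible, and then $T_{AB}=T_{B^{*}A^{*}}^{-1}\circ\kappa_{AB}$ is a composite of isomorphisms, closing the cycle.

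The main obstacle is the double-transpose identity itself: unwinding $T_{B^{*}A^{*}}\circ T_{AB}$ into $\kappa_{AB}$ forces one to track several nested insertions $\epsilon$ and to invoke the defining equation of $T$ twice, and — crucially — to argue the equality at the level of \emph{morphisms} rather than merely on states, which is exactly where complete injectivity of the insertions and the uniqueness clauses behind $T$ and $d$ do the work. An alternative route for (3)$\Rightarrow$(1) that avoids the split monic/epic bookkeeping is to assemble an explicit isomorphism $(A\Rightarrow B)\cong(A\otimes B^{*})^{*}\cong(B^{*}\Rightarrow A^{*})$ from the static-currying isomorphism $\phi$ (Appendix E), the symmetry $\sigma$, and $d_B$, and then identify this composite with $T_{AB}$ through the universal property of $T$; this merely trades the double-transpose computation for a matching-of-universal-properties computation of comparable difficulty.
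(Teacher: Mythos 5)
Your proposal is correct in outline, and two of its three implications coincide with the paper's own proof (Appendix H): the implication from adjoint dynamics for all $A,B$ to adjoint dynamics for $I,B$ is the same trivial specialisation, and your identification $d_B=\theta_B\circ T_{IB}\circ\epsilon_{I,B}^{-1}$, justified by checking the defining condition of $d_B$ and invoking its uniqueness, is exactly the paper's decomposition of $d_B$ in terms of $T_{IB}$ and a pair of canonical isomorphisms. Where you genuinely diverge is the substantial implication from double duals to adjoint dynamics for all $A,B$. The paper handles it with a single decomposition: $T_{AB}$ is shown (again by verifying the defining equation of $T$ and invoking uniqueness) to equal the post-composition map $A\Rightarrow d_B$ followed by canonical isomorphisms built from the static currying $\phi$ and the symmetry, so invertibility of $d_B$ alone makes $T_{AB}$ invertible in one step. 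Your route instead channels everything through the double-transpose identity $T_{B^{*}A^{*}}\circ T_{AB}=\kappa_{AB}$ (writing $A^{*}$ for $A\Rightarrow I$ as you do) plus the two-out-of-six argument, which needs all four of $d_A,d_B,d_{A^{*}},d_{B^{*}}$ to be invertible and leaves the identity itself --- the real content --- as a lemma you only sketch. That identity is true in any CSMC (it is the internalised naturality square of $d$) and is provable by the means you describe, so nothing here would fail; but it is strictly more work than the paper's check. Indeed, the ``alternative route'' you dismiss at the end as being of ``comparable difficulty'' \emph{is} the paper's proof, and it is simpler, not comparable: one universal-property verification, no split-monic/split-epic bookkeeping, and only one dualiser involved.

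One caution, short of a gap: both in your step from adjoint dynamics for $I,B$ to double duals and in any eventual proof of the double-transpose identity, the verification must be made against the morphism-level defining equations, with open wires. Your parenthetical gloss (``both sides return the scalar $e\circ\rho$ obtained by feeding a state $\rho$ of $B$ into $e$'') describes agreement on states, which in a HOPT does not imply equality of morphisms absent an enough-states assumption. The correct mechanism is the one you name elsewhere --- complete injectivity of the insertions together with the uniqueness clauses behind $T$ and $d$ --- so the checks should be phrased as diagram equalities with the $B$ (and effect) wires left open, exactly as the paper does.
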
 
\begin{proof}
Given in Appendix H. 
\end{proof}
Given  two systems $A$ and $B$ that are canonically equivalent to their double duals,  it is natural to ask whether equivalence is preserved by the binary operations  $(-\otimes-)$ and $(-\Rightarrow-)$, in the following sense: 
\begin{defn}[Preservation of equivalence of double duals]
A binary operation $\odot: o({\cal C}) \times o({\cal C}) \rightarrow o({\cal C})$ {\em preserves equivalence of double duals} if $d_{A \odot B}$ is an isomorphism  whenever $d_A$ and $d_{B}$ are isomorphisms.  
\end{defn}
We now show that the preservation of equivalence by the tensor product $\otimes$ is enough to guarantee preservation of the equivalence by the higher order composition $\Rightarrow$: 
\begin{thm}[Lifting canonical isomorphisms]
For every HOPT $\mathcal{C}$, if $(-\otimes-)$ preserves equivalence of double duals then $(- \Rightarrow -)$ preserves equivalence of double duals.
\end{thm}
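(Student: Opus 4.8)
The plan is to reduce the claim to the tensor hypothesis by exhibiting, whenever $d_B$ is an isomorphism, a canonical isomorphism $(A \Rightarrow B) \cong \big((A \otimes (B \Rightarrow I)) \Rightarrow I\big)$, and then showing that the object on the right is canonically equivalent to its double dual. Throughout I abbreviate $X^{*} := (X \Rightarrow I)$, so that $d_X : X \to X^{**}$ is the dualising process; recall that $(-)^{*}$ is a contravariant functor (precomposition of effects onto a morphism) and that $d$ is a natural transformation, i.e.\ $d_X \circ g = g^{**} \circ d_Y$ for every $g : Y \to X$.

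First I would record a lemma: \emph{if $d_X$ is an isomorphism, then so is $d_{X^{*}}$}. This rests on the triangle identity $(d_X)^{*} \circ d_{X^{*}} = \mathrm{id}_{X^{*}}$, the standard zig-zag relation for the dualising transformation, which can be verified directly from the defining condition of $d$ in Appendix C. Granting it, if $d_X$ is an isomorphism then so is $(d_X)^{*}$ by functoriality of $(-)^{*}$, whence $d_{X^{*}} = \big((d_X)^{*}\big)^{-1}$ is an isomorphism as well.

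Then, assuming $d_A$ and $d_B$ are isomorphisms, I would apply the lemma to $B$ to see that $d_{B^{*}}$ is an isomorphism. Since $(-\otimes-)$ preserves equivalence of double duals and $d_A$, $d_{B^{*}}$ are both isomorphisms, $d_{A \otimes B^{*}}$ is an isomorphism. Setting $Z := A \otimes B^{*}$ and applying the lemma a second time, now to $Z$, gives that $d_{Z^{*}}$ is an isomorphism.

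It remains to identify $A \Rightarrow B$ with $Z^{*}$ and transport the property. Instantiating the static-currying isomorphism at $C = A$, input $B^{*}$ and output $I$ gives $\phi : (A \Rightarrow (B^{*} \Rightarrow I)) \to ((A \otimes B^{*}) \Rightarrow I)$, that is $(A \Rightarrow B^{**}) \cong Z^{*}$, using $B^{*} \Rightarrow I = B^{**}$. Because $d_B : B \to B^{**}$ is an isomorphism and $(A \Rightarrow -)$ is a functor, post-composition with $d_B$ yields an isomorphism $A \Rightarrow B \cong A \Rightarrow B^{**}$, and composing with $\phi$ produces an isomorphism $g : (A \Rightarrow B) \to Z^{*}$. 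Finally, naturality of $d$ gives $d_{Z^{*}} \circ g = g^{**} \circ d_{A \Rightarrow B}$, and since $g$, $g^{**}$ and $d_{Z^{*}}$ are all isomorphisms, so is $d_{A \Rightarrow B}$, which is exactly the statement that $(- \Rightarrow -)$ preserves equivalence of double duals. I expect the main obstacle to be the lemma, and specifically pinning down the triangle identity $(d_X)^{*} \circ d_{X^{*}} = \mathrm{id}_{X^{*}}$ diagrammatically; the remaining steps are routine functoriality and naturality bookkeeping on top of the already-established currying isomorphism.
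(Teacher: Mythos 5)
Your proposal is correct and takes essentially the same route as the paper's own proof: the same bridging isomorphism $(A \Rightarrow B) \cong \big[(A \otimes (B \Rightarrow I)) \Rightarrow I\big]$ built from $\phi$ and $d_B$ (the paper's $m$ is your $g$), the same key lemma that $d_{X \Rightarrow I}$ is invertible whenever $d_X$ is (the paper establishes precisely your triangle identity, phrased as $d_B \Rightarrow \mathrm{id}$ being an invertible one-sided inverse of $d_{B \Rightarrow I}$), and the same application of the $\otimes$-hypothesis to the object $A \otimes (B \Rightarrow I)$. The only difference is cosmetic: where you invoke general naturality of $d$ to transport invertibility across $g$, the paper proves that specific naturality square diagrammatically, by exhibiting $d_{A \Rightarrow B}$ as a composite of $m$, $d_{(A \otimes (B \Rightarrow I)) \Rightarrow I}$, and $(m \Rightarrow I) \Rightarrow I$.
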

\begin{proof}
Given in Appendix I.
\end{proof}
For every  tight   HOPT  $({\cal C},  {\cal C}_{1})$, a crucial consequence of the above theorem is that $*$-autonomy lifts from first-order to higher orders,  provided that the tensor product preserves equivalence with double duals: 
\begin{thm}
Let $(\mathcal{C},\mathcal{C}_{1})$ be a \textit{tight} HOPT. If 
\begin{itemize}
    \item for all objects $A \in \mathcal{C}_1$ the canonical morphism $d_A: A \rightarrow [I \Rightarrow (I \Rightarrow A)]$ is an isomorphism, and 
    \item the monoidal product $\otimes:\mathcal{C} \times \mathcal{C} \rightarrow \mathcal{C}$ preserves equivalence of double duals, 
\end{itemize}
then $\mathcal{C}$ is $*$-autonomous with dualising object $I$.
\end{thm}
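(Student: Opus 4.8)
The plan is to prove $*$-autonomy by structural induction over the objects of $\mathcal{C}$, using crucially that $(\mathcal{C},\mathcal{C}_1)$ is \emph{tight}: every object of $\mathcal{C}$ is a well-formed expression built from the objects of $\mathcal{C}_1$ by repeated application of the two binary operations $\otimes$ and $\Rightarrow$. By the definition of $*$-autonomy with global dualising object $I$, the goal is exactly to show that the dualising process $d_A: A \rightarrow (A \Rightarrow I) \Rightarrow I$ is an isomorphism for \emph{every} object $A \in o(\mathcal{C})$; equivalently, that the class of objects canonically equivalent to their double dual is all of $o(\mathcal{C})$.

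First I would record the base case. Reading the first hypothesis as the statement that each $A \in o(\mathcal{C}_1)$ is canonically equivalent to its double dual (i.e.\ $d_A$ is an isomorphism), the class of ``good'' objects contains every generator $A \in o(\mathcal{C}_1)$. Next I would set up the inductive step, assuming $X$ and $Y$ are good (so $d_X$ and $d_Y$ are isomorphisms) and showing that $X \otimes Y$ and $X \Rightarrow Y$ are good too. The $\otimes$ case is immediate: the second hypothesis says precisely that $\otimes$ preserves equivalence of double duals, so $d_{X \otimes Y}$ is an isomorphism. The $\Rightarrow$ case is where the earlier machinery carries the load: by the Lifting-canonical-isomorphisms theorem, the fact that $\otimes$ preserves equivalence of double duals forces $\Rightarrow$ to preserve it as well, whence $d_{X \Rightarrow Y}$ is an isomorphism. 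Thus the good objects form a class containing $o(\mathcal{C}_1)$ and closed under both $\otimes$ and $\Rightarrow$, so by tightness it coincides with $o(\mathcal{C})$, and $\mathcal{C}$ is $*$-autonomous with dualising object $I$.

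I do not expect a real obstacle, since the substance has been delegated to the Lifting theorem and to the definition of tightness; the argument is essentially an induction on term structure. The only points requiring care are bookkeeping ones. I would confirm that the inductive generation of $o(\mathcal{C})$ via $\otimes$ and $\Rightarrow$ lines up clause-by-clause with the hypotheses (generators handled by the base case, the $\otimes$-clause by the second hypothesis, the $\Rightarrow$-clause by the Lifting theorem), and I would reconcile the notation in the first hypothesis with the established definition of the dualising process, using the insertion isomorphism $\epsilon_{I,A}$ where needed, so that the base case is literally an instance of canonical equivalence to the double dual $(A \Rightarrow I) \Rightarrow I$.
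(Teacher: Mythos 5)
Your proposal is correct and is essentially the paper's own argument: the paper's proof is the one-line observation that the objects of $\mathcal{C}$ are generated from those of $\mathcal{C}_1$ by $\otimes$ and $\Rightarrow$, which unpacks to exactly your structural induction (base case from the first hypothesis, $\otimes$-step from the second, $\Rightarrow$-step delegated to the Lifting-canonical-isomorphisms theorem). Your closing remark about reconciling the type $[I \Rightarrow (I \Rightarrow A)]$ in the statement with the double dual $(A \Rightarrow I) \Rightarrow I$ via the defining isomorphisms is the right reading of that hypothesis.
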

\begin{proof}
Follows immediately from the fact that the objects of $\cal C$ are generated from the objects of ${\cal C}_1$ through the operations $\otimes$ and $\Rightarrow$.  
\end{proof}
\section{A stronger no-signalling property}
We conclude the paper by showing a strengthening of the no-signalling property shown in subsection~\ref{subsec:nosigtens}. There we saw that in a deterministic theory with no correlations with single-state objects, the states of type $(A \Rightarrow A') \otimes X$ represent processes which are non-signalling from $A$ to $X$ whenever $A$ and $A'$ are causal objects. We now show that, in the presence of equivalence to double duals, this no-signalling property can be strengthened: the tensor product $(A\Rightarrow A')\otimes X$ is no-signalling from the whole system $(A\Rightarrow A')$ to $X$.
\begin{defn}
An object $Y$ in a deterministic process theory   $\mathcal{C}$ has {\em no-signalling states}   if for every object $X$ and every bipartite state $m: I \rightarrow Y \otimes X$ there exists a state $m': I \rightarrow X$ such that for every $\Pi: Y \rightarrow I$ 
\begin{equation}
    \tikzfig{figs/ignore1}
 \end{equation}
\end{defn}
In other words an object $Y$ has no-signalling states if the choice of effect for discarding object $Y$ in a bipartite object $X\otimes Y$ does not affect the marginal state of system $X$.  
\begin{thm}
Let $\mathcal{C}$ be a deterministic HOPT with no correlations with single-state objects. If
\begin{itemize}
    \item $\otimes$ preserves equivalence with double duals, and
    \item $A$ and $A'$ are causal and canonically equivalent to their double duals,
\end{itemize}
then the object $(A \Rightarrow A')$ has no-signalling states. 
\end{thm}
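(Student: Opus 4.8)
The plan is to reduce the statement to the earlier non-signalling theorem by first pinning down exactly what the effects on the object $A \Rightarrow A'$ look like. Concretely, I would prove the structural claim that every effect $\Pi : (A \Rightarrow A') \to I$ has the ``insert-and-discard'' form
\[ \Pi_{\rho} \;=\; \pi_{A'} \circ \epsilon_{A,A'} \circ (\mathrm{id}_{A \Rightarrow A'} \otimes \rho), \]
for some state $\rho : I \to A$, where $\pi_{A'}$ is the unique effect on the causal object $A'$. Once this is available the theorem follows quickly, so the real content lies in this characterisation of effects.

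Granting the claim, the reduction goes as follows. Fix a bipartite state $f : I \to (A \Rightarrow A') \otimes X$ and let $m : A \to A' \otimes X$ be the process built from $f$ exactly as in the earlier non-signalling theorem. By that theorem $m$ is non-signalling from $A$ to $X$, so, using that $\pi_{A'}$ is the unique effect on $A'$, discarding $A'$ gives $(\pi_{A'} \otimes \mathrm{id}_X) \circ m = f' \circ \pi_A$ for a fixed state $f' : I \to X$ (independent of any insertion) and some effect $\pi_A : A \to I$. For an arbitrary effect $\Pi = \Pi_\rho$ I would then compute $(\Pi_\rho \otimes \mathrm{id}_X) \circ f = (\pi_{A'} \otimes \mathrm{id}_X) \circ m \circ \rho = f' \circ \pi_A \circ \rho$, since feeding $\rho$ into the $A$-input of $m$ is precisely what $\Pi_\rho$ does to $f$ (this uses only the definition of $m$ and naturality of the braiding). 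The composite $\pi_A \circ \rho : I \to I$ is a scalar, hence equal to $1 = \mathrm{id}_I$ by determinism, leaving $(\Pi_\rho \otimes \mathrm{id}_X) \circ f = f'$. As $f'$ does not depend on $\rho$, the marginal $m' := f'$ witnesses that $A \Rightarrow A'$ has no-signalling states.

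For the structural claim I would establish the tensor-hom duality $(A \Rightarrow A') \Rightarrow I \cong A \otimes (A' \Rightarrow I)$ through the canonical map. Writing $X^{*}$ for $X \Rightarrow I$, the static-currying isomorphism $\phi$ gives $(A \otimes (A')^{*}) \Rightarrow I \cong A \Rightarrow ((A')^{*} \Rightarrow I) = A \Rightarrow (A')^{**}$, which is $\cong A \Rightarrow A'$ because $A'$ is canonically equivalent to its double dual. Thus $(A \otimes (A')^{*})^{*} \cong A \Rightarrow A'$; dualising once more and using that $A \otimes (A')^{*}$ is itself canonically equivalent to its double dual yields $(A \Rightarrow A')^{*} \cong A \otimes (A')^{*}$. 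This last double-dual equivalence is exactly where the hypotheses are spent: $A'$ equivalent to its double dual, together with $I$ self-dual and the Lifting theorem (which upgrades ``$\otimes$ preserves equivalence of double duals'' to the same for $\Rightarrow$), makes $(A')^{*}$ equivalent to its double dual, and a final application of preservation by $\otimes$ (with $A$ equivalent to its double dual) does the same for $A \otimes (A')^{*}$. Now $A'$ causal makes $(A')^{*} = A' \Rightarrow I$ a single-state object, so ``no correlations with single-state objects'' forces every state of $A \otimes (A')^{*}$ to factor as $\rho \otimes \pi$ with $\pi$ the unique state of $(A')^{*}$; transporting along the canonical isomorphism, every state of $(A \Rightarrow A')^{*}$, equivalently every effect on $A \Rightarrow A'$, is of the form $\Pi_\rho$, which is the claim.

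The main obstacle is this structural claim, and within it the verification that the isomorphism $(A \Rightarrow A')^{*} \cong A \otimes (A')^{*}$ obtained by chaining the currying and double-dual isomorphisms really coincides with the canonical insert-and-discard map $\Theta$, namely the curry of $A \otimes (A')^{*} \otimes (A \Rightarrow A') \xrightarrow{\epsilon} A' \otimes (A')^{*} \xrightarrow{\epsilon} I$. Only for the canonical map does ``image on states equals $\{\Pi_\rho\}$'' translate into the effect-form required by the reduction. I expect this to follow either by a direct diagrammatic chase or by a uniqueness argument of Yoneda type, since all the constituent isomorphisms are themselves canonical. Everything else, namely the scalar collapse by determinism and the product form of states by no-correlations, is routine once the duality is in place.
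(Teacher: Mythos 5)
Your proposal follows essentially the same route as the paper's proof: the paper likewise reduces everything to the structural claim that every effect on $A \Rightarrow A'$ factors as insertion of a state of $A$ followed by the (unique) discard of $A'$, obtained from the canonical isomorphism $A \otimes (A' \Rightarrow I) \cong (A \Rightarrow A') \Rightarrow I$ built out of $\phi$, the double-dual equivalences and preservation under $\otimes$, and then applies no-correlations-with-single-state-objects and collapses the leftover scalar by determinism. The step you defer --- checking that the chained isomorphism really is the canonical insert-and-discard map --- is precisely where the paper spends its diagrammatic effort (an explicit verification using the defining properties of $d$, $\phi$ and the sequential-composition supermap); the only other, cosmetic, difference is that you invoke the earlier non-signalling theorem as a black box where the paper re-runs its factorisation argument inline.
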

\begin{proof}
Given in Appendix J.
\end{proof}
The theorem shows that, no matter which supermap is applied on the system $A\Rightarrow A'$, and no matter the way a system is discarded, the state of any other system in parallel will be unaffected. Indeed, for every pair of processes $S:   (A\Rightarrow A')  \rightarrow Y$ and $T:   (A\Rightarrow A')  \rightarrow Z$, and every pair of effects $e:   Y\rightarrow I$  and $k:  Z\rightarrow I$, one has 
\begin{equation}
    \tikzfig{figs/ignore_demo_2}
 \end{equation}
In other words, the choice of a supermap on system $A\Rightarrow A'$ cannot signal to any other system $X$. This can be seen as a generalised causality condition for circuits of processes within a HOPT.

\section{Conclusions}
We presented   HOPTs/CSMCs   as an operationally motivated  framework for higher order physics. By using the diagrammatic gadgets which come with a HOPT, we recovered signalling restrictions between process wires as a consequence of simple principles. We demonstrated that for a sufficiently tame notion of parallel composition the defining condition of $*$-autonomy (with global dualising object $I$) lifts from a first-order theory to its entire higher order theory. Following on from this, we showed that HOPTs with the above notion of $*$-autonomy satisfy a stronger causality condition, namely that a supermap on first-order processes cannot be used to signal to other factors of a tensor product. We hope that the definition of HOPTs will serve as a tool to guide the exploration of new structures arising in higher order physical theories.

\section{Acknowledgments}
MW would like to thank B Coecke, A Vanrietvelde, H Kristjánsson, J Hefford, A Kissinger, V Wang, J Selby, and G Boisseau for useful conversations. This work is supported by the Hong Kong Research Grant Council through grant 17300918 and though the Senior Research Fellowship Scheme SRFS2021-7S02, by the Croucher Foundation, by the John Templeton Foundation through grant 61466, The Quantum Information Structure of Spacetime (qiss.fr).  Research at the Perimeter Institute is supported by the Government of Canada through the Department of Innovation, Science and Economic Development Canada and by the Province of Ontario through the Ministry of Research, Innovation and Science. The opinions expressed in this publication are those of the authors and do not necessarily reflect the views of the John Templeton Foundation. MW gratefully acknowledges support by University College London
and the EPSRC Doctoral Training Centre for Delivering Quantum Technologies.

\bibliographystyle{eptcs}
\bibliography{refs_18_08_2021}

\appendix

\section{Well behaviour of sequential and parallel composition supermaps}
We check that basic manipulations behave as expected whenever they exist in a process theory.
\begin{thm}
Let $\mathcal{C}$ be a process theory equipped with a completely injective insertion $\epsilon_{A , B}$ for each pair of objects $A,B$ and with \textit{basic manipulations} for all objects, then it follows that for each $f,g$ and manipulation $\otimes$ or $\circ$: \begin{equation}\tikzfig{figs/appendix_s} \quad \quad \quad \quad \tikzfig{figs/appendix_p}\end{equation}
\end{thm}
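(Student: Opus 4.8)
The engine of the proof is \emph{complete injectivity} of the insertions: to establish an equality between two states of an object $A\Rightarrow C$ it suffices, taking the domain object in the definition of $E$ to be the unit $I$, to show that the two states agree after post-composition with $\epsilon_{A,C}$ (i.e.\ that they have the same image under $E^{I}_{A,C}$). The plan is therefore to reduce each claimed static identity to an identity between ordinary (dynamic) processes, where it holds by the defining property of the insertions. Throughout I use the two characterising equations supplied by \emph{basic manipulations}, which I read as
\begin{align}
\epsilon_{A,C}\circ(\circ_{ABC}\otimes \mathrm{id}_A) &= \epsilon_{B,C}\circ(\mathrm{id}_{B\Rightarrow C}\otimes \epsilon_{A,B}),\\
\epsilon_{A\otimes B,\,A'\otimes B'}\circ(\otimes_{AA'BB'}\otimes \mathrm{id}_{A\otimes B}) &= (\epsilon_{A,A'}\otimes \epsilon_{B,B'})\circ \tau,
\end{align}
where $\tau$ is the symmetry rearranging $(A\Rightarrow A')\otimes(B\Rightarrow B')\otimes A\otimes B$ into $(A\Rightarrow A')\otimes A\otimes (B\Rightarrow B')\otimes B$, together with the defining equation of an insertion, $\epsilon_{A,B}\circ(\hat f\otimes \mathrm{id}_A)=f$.

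For the sequential identity $\circ_{ABC}\circ(\hat g\otimes \hat f)=\widehat{g\circ f}$ (with $f:A\to B$, $g:B\to C$), I would first note that both sides are states of $A\Rightarrow C$, so by complete injectivity it is enough to check their images under $E^{I}_{A,C}$. The right-hand side is immediate: $\epsilon_{A,C}\circ(\widehat{g\circ f}\otimes \mathrm{id}_A)=g\circ f$. For the left-hand side I would push $\circ_{ABC}$ against $\epsilon_{A,C}$ using the first characterising equation, then whisker in $\hat g\otimes \hat f$, obtaining $\epsilon_{B,C}\circ(\mathrm{id}_{B\Rightarrow C}\otimes \epsilon_{A,B})\circ(\hat g\otimes \hat f\otimes \mathrm{id}_A)$. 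Collapsing the inner insertion via $\epsilon_{A,B}\circ(\hat f\otimes \mathrm{id}_A)=f$ leaves $\epsilon_{B,C}\circ(\hat g\otimes f)$, and collapsing the remaining insertion via $\epsilon_{B,C}\circ(\hat g\otimes \mathrm{id}_B)=g$ leaves $g\circ f$. The two images coincide, so injectivity gives the claim.

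The parallel identity $\otimes_{AA'BB'}\circ(\hat f\otimes \hat g)=\widehat{f\otimes g}$ is handled the same way: both sides are states of $(A\otimes B)\Rightarrow(A'\otimes B')$, and after applying $E^{I}_{A\otimes B,\,A'\otimes B'}$ the right-hand side reduces to $f\otimes g$ by the insertion property, while the left-hand side reduces, via the second characterising equation and the two insertions $\epsilon_{A,A'}\circ(\hat f\otimes\mathrm{id}_A)=f$ and $\epsilon_{B,B'}\circ(\hat g\otimes\mathrm{id}_B)=g$, to $(\epsilon_{A,A'}\otimes\epsilon_{B,B'})\circ\tau\circ(\hat f\otimes\hat g\otimes \mathrm{id}_{A\otimes B})=f\otimes g$. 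Complete injectivity then closes the argument. I expect no conceptual obstacle; the only delicate point is the bookkeeping in the parallel case, namely threading the states $\hat f,\hat g$ and the input wires $A,B$ through the symmetry $\tau$ so that each insertion sees its own process and its own argument. This is exactly the naturality juggling that the graphical calculus absorbs, so the diagrammatic proof should be short, with the symmetry coherence the only thing to state carefully.
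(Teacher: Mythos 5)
Your proposal is correct and follows essentially the same route as the paper's own proof: both arguments compose each side with the insertion (your $E^{I}_{A,C}$), use the defining equation of the basic manipulations together with the insertion property $\epsilon\circ(\hat f\otimes \mathrm{id})=f$ to reduce both images to $g\circ f$ (respectively $f\otimes g$), and then invoke complete injectivity to remove the insertion and conclude equality of the static processes. The paper presents this as a single diagrammatic chain with the symmetry bookkeeping absorbed by the graphical calculus, exactly as you anticipate.
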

\begin{proof}
For sequential composition note that \begin{equation}\tikzfig{figs/appendix_comp}\end{equation} and so the result is entailed by complete injectivity of the insertion, which allows the removal of insertions whilst preserving equality of diagrams. The proof for the parallel composition supermap is almost identical. 
\end{proof}
Complete injectivity also implies an associativity property of the sequential composition maps. Namely noting that:
\begin{equation}\tikzfig{figs/spider_appendix_1}\end{equation}
it follows that,
\begin{equation}\tikzfig{figs/spider_appendix_3}\end{equation}
this associativity property means that a $3$ input sequential composition map can be written unambiguously as 
\begin{equation}\tikzfig{figs/spider_appedix_4}\end{equation}
and similarly for $n$-input sequential composition processes. Furthermore each sequential composition of type $(A \Rightarrow B) \otimes (A \Rightarrow A) \rightarrow (A \Rightarrow B)$ has the static version of the identity as its right-unit, meaning that the following equation holds: \begin{equation}\tikzfig{figs/unit_proof_2}\end{equation} indeed this follows from noting that \begin{equation}\tikzfig{figs/unit_proof}\end{equation} and again using complete-injectivity. An almost identical proof can be used to show that the static identity $\hat{id_{A}}$ acts as a left-unit for the sequential composition of type $(A \Rightarrow A) \otimes (A \Rightarrow B) \rightarrow (A \Rightarrow B)$. The properties of associativity and unitality also entail that the assignment $f \Rightarrow g$ given by pre-composition with $f$ and post-composition with $g$:
\begin{equation}\tikzfig{figs/appendix_bifunctor}\end{equation}
is a bifunctor, meaning that $(f \Rightarrow g) \circ (f' \Rightarrow g') = (f' \circ f) \Rightarrow (g \circ g')$ and that identities are preserved. The above equation can be demonstrated to be true by witnessing two equal interpretations of the same $5$ input diagram:
\begin{equation}\tikzfig{figs/appendix_bifunctor_2}\end{equation}
It follows from the above that whenever $f$ is an isomorphism (meaning that it has both a left and a right inverse) and $g$ is an isomorphism then $f \Rightarrow g$ is an isomorphism.
\section{Equivalence between higher order process theories and closed monoidal categories}

\begin{thm}[HOPTs =  CSMCs]
A symmetric monoidal category $\mathcal{C}$  is a HOPT  if and only if $\mathcal{C}$ is a closed symmetric monoidal category.
\end{thm}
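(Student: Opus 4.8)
The plan is to prove the two implications separately, with the universal property of currying as the pivot. For the easy direction, CSMC $\Rightarrow$ HOPT, I would take the evaluation maps $\epsilon_{A\Rightarrow B}$ of the closed structure as the insertions, setting $\epsilon_{A,B}:=\epsilon_{A\Rightarrow B}$ and defining the static version of $f:A\rightarrow B$ to be the curry $\hat f:=\overline{f\circ\lambda_A}:I\rightarrow(A\Rightarrow B)$, where $\lambda_A:I\otimes A\rightarrow A$ is the unitor. The insertion equation and the complete injectivity of $E_{A,B}^{C}$ are then precisely the existence and uniqueness halves of the defining property of $\bar{(-)}$. The basic-manipulation processes are obtained as curries of canonical maps: $\circ_{ABC}$ as the curry of $\epsilon_{B,C}\circ(\id{}\otimes\epsilon_{A,B})$ after reordering inputs, and $\otimes_{AA'BB'}$ as the curry of the two parallel evaluations $\epsilon_{A,A'}\otimes\epsilon_{B,B'}$; their defining equations follow by inserting static processes and invoking uniqueness of curries. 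Finally $\epsilon_{I,A}$ is shown to be an isomorphism with inverse $\overline{\lambda_A}$, a standard closed-monoidal fact.

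The substance lies in the converse, HOPT $\Rightarrow$ CSMC, where I must construct for each $f:C\otimes A\rightarrow B$ a morphism $\bar f:C\rightarrow(A\Rightarrow B)$ satisfying $\epsilon_{A,B}\circ(\bar f\otimes\id{A})=f$; uniqueness of $\bar f$ is immediate from complete injectivity of $E^{C}_{A,B}$. Following the hint, I would build $\bar f$ from the static version $\hat f:I\rightarrow\big((C\otimes A)\Rightarrow B\big)$ (guaranteed by Axioms 1--2) together with the inverse of the linking isomorphism $\epsilon_{I,C}$, via the composite
\[
C \;\xrightarrow{\;\epsilon_{I,C}^{-1}\otimes\,\widehat{\id{A}}\;}\; (I\Rightarrow C)\otimes(A\Rightarrow A) \;\xrightarrow{\;\otimes\;}\; \big(A\Rightarrow (C\otimes A)\big) \;\xrightarrow{\;\id{}\otimes\,\hat f\;}\; \big(A\Rightarrow(C\otimes A)\big)\otimes\big((C\otimes A)\Rightarrow B\big) \;\xrightarrow{\;\circ\;}\; (A\Rightarrow B),
\]
where the unitor and associator isomorphisms are suppressed. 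Intuitively this takes a ``value'' of $C$, uses parallel composition to pair it beside the genuine input $A$ so as to assemble a $C\otimes A$, and then uses sequential composition against $\hat f$ to feed the result through $f$.

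The verification that this composite satisfies the evaluation equation is the main obstacle. After forming $\bar f\otimes\id{A}$ and post-composing with $\epsilon_{A,B}$, I would rewrite using the defining equation of $\circ$ (which, as established in Appendix A, turns an insertion of a sequential composite into two nested insertions): this replaces the outer $\circ$ followed by $\epsilon_{A,B}$ by $\epsilon_{C\otimes A,B}$ applied to $\hat f$, fed by the output of $\epsilon_{A,C\otimes A}$ applied to the $\otimes$-built process. The defining equation of $\otimes$ then splits the latter insertion into two parallel insertions, one of which is $\epsilon_{A,A}\circ(\widehat{\id{A}}\otimes\id{A})=\id{A}$ and the other $\epsilon_{I,C}\circ(\epsilon_{I,C}^{-1}\otimes\id{I})=\id{C}$, this second equality being exactly where Axiom 4 (the isomorphism property of $\epsilon_{I,C}$) is used. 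Unwinding collapses the inner insertion to the canonical map $C\otimes A\rightarrow C\otimes A$, so that the outer insertion of $\hat f$ returns $f$ itself.

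The delicate points throughout are the bookkeeping of the suppressed unitors and associators and the symmetry isomorphism used to route the $C$-factor into the correct tensor slot of the $\otimes$-process; these must be tracked so that the defining equations of $\circ$ and $\otimes$ apply on the nose. All intermediate equalities between the resulting static diagrams are secured by complete injectivity, which licenses the removal of insertions while preserving diagram equality, exactly as in the lemma of Appendix A. Since uniqueness has already been handled, establishing existence via this construction completes the proof that every HOPT carries a closed symmetric monoidal structure, and hence that the two notions coincide.
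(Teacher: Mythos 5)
Your proposal is correct and follows essentially the same route as the paper's own proof: the paper's ``partial insertion'' $\Delta$ is exactly your composite with $\hat f$ factored out, so both arguments construct the curry of $f$ from its static version via $\epsilon_{I,C}^{-1}$, the static identity, and the basic manipulations $\otimes$ and $\circ$, with uniqueness secured by complete injectivity; the easy direction (currying the unitor to invert $\epsilon_{I,A}$, obtaining $\circ$ and $\otimes$ as curries of canonical maps) also matches the paper.
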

\begin{proof}
The proof rests on the same key point as the characterisation theorem for linked monoidal super-categories, that one can construct the curried version of any process $f$ using the fully static version $\hat{f}$ along with the basic manipulations $\otimes$, $\circ$, and $\epsilon_{I , A}$. For readability we treat $\mathcal{C}$ to be strict monoidal, so that we do not need to include static unitors in our definitions. We introduce a key process $\Delta$ named ``\textit{partial insertion}" which takes the static form $\hat{f}: I \rightarrow (C \otimes A) \Rightarrow B$ of a process $f: C \otimes A \rightarrow B$ and a state of type $C$ and then inserts that state of type $C$ into $\hat{f}$ to produce a new static process of type $\Delta(f,c):I \rightarrow (A \Rightarrow B)$.
\begin{equation}\tikzfig{figs/delta_define}\end{equation}
Using the defining equations of a Higher order process theory, $\Delta$ satisfies
\begin{equation}
  \tikzfig{figs/thetadefined_2_a}
 \end{equation}
In turn this entails that for each $f$ there exists a process $\Delta(f) := \Delta \circ (\hat{f} \otimes id)$ which satisfies
 \begin{equation}
  \tikzfig{figs/thetaproved}
 \end{equation}
 More-over by complete injectivity this $\Delta(f)$ is the unique morphism satisfying the above condition. The unique choice $\Delta(f)$ for each $f$ then satisfies the defining condition of a closed monoidal category. To show that every closed symmetric monoidal category is a HOPT all that is required is to show that $\epsilon_{I,A}$ is an isomorphism and that the sequential and parallel composition processes $\otimes$ and $\circ$ must exist. The latter is well known \cite{Johnstone1983BASIC64}, and follows by considering the left hand side of the defining equations of sequential and parallel composition processes to take the place of the arbitrary $f$ in the definition of a closed symmetric monoidal category. The two-sided inverse of $\epsilon_{I,A}$ which regards it an isomorphism is constructed by currying of the unitor of a symmetric monoidal category $\lambda: A \otimes I \rightarrow A$ to $\bar{\lambda}: A \rightarrow (I \Rightarrow A)$, in process-theoretic language, the inverse of $\epsilon_{I,A}$ is given graphically by currying the identity.
\end{proof}

\section{The existence of canonical processes of HOPTs}
In this section we prove that in any HOPT $\mathcal{C}$ morphisms satisfying the defining conditions for $d_A$, $T_{AB}$, and $\phi_{ABC}$ as defined in the main text, uniquely exist for all objects of $\mathcal{C}$.
\begin{thm}
The following hold in any HOPT $\mathcal{C}$:
\begin{itemize}
    \item For each object $A$ there exists a unique dualiser $d_A$
    \item For each pair $A,B$ there exists a unique lifting process $T_{AB}$
    \item For each triple $A,B,C$ there exists a unique static currying $\phi_{ABC}$
\end{itemize}
\end{thm}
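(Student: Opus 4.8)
The plan is to leverage the equivalence \textbf{HOPTs = CSMCs} established above: since $\mathcal{C}$ is a HOPT it is a closed symmetric monoidal category, so for every morphism $g: C \otimes A \to B$ there is a \emph{unique} curried morphism $\bar g : C \to (A \Rightarrow B)$ satisfying the evaluation identity of the CSMC definition. I would prove all three existence-and-uniqueness claims in one stroke by exhibiting $d_A$, $T_{AB}$ and $\phi_{ABC}$ each as the curry of an explicit morphism assembled from insertions $\epsilon$, symmetries, and the basic manipulations; existence and uniqueness then both drop out of the universal property of currying.

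Concretely, I would first handle $\phi_{ABC}$, since it is closest to a bare currying. Starting from the morphism $[C \Rightarrow (A \Rightarrow B)] \otimes (C \otimes A) \to B$ obtained by inserting $C$ into $C \Rightarrow (A \Rightarrow B)$ via $\epsilon_{C,\,A\Rightarrow B}$ and then inserting $A$ into the resulting $A \Rightarrow B$ via $\epsilon_{A,B}$, I would curry over the composite input $C \otimes A$; the result is a morphism $[C \Rightarrow (A \Rightarrow B)] \to [(C\otimes A) \Rightarrow B]$ which by construction solves the defining condition for $\phi$. For $d_A$, I would take the morphism $A \otimes (A \Rightarrow I) \to I$ given by the symmetry $\sigma$ followed by $\epsilon_{A,I}$, and curry over $(A \Rightarrow I)$ to obtain $A \to [(A \Rightarrow I) \Rightarrow I]$. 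For $T_{AB}$, I would start from $(A \Rightarrow B) \otimes (B \Rightarrow I) \otimes A \to I$, formed by inserting $A$ into $A \Rightarrow B$ to produce a $B$ and then inserting that $B$ into $B \Rightarrow I$ via $\epsilon_{B,I}$, and curry twice — over $A$, then over $(B \Rightarrow I)$ — to land in $[(B \Rightarrow I)\Rightarrow(A \Rightarrow I)]$.

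The remaining work, which I expect to be the main obstacle, is to verify that the defining equations stated in the main text — phrased via \emph{insertion of arbitrary static processes}, and so quantifying over all test inputs — are equivalent to the single currying conditions above. The bridge is exactly the two structural features a HOPT provides: the bijection between dynamic processes $\mathcal{C}(A,B)$ and static processes $\mathcal{C}(I, A \Rightarrow B)$ coming from Axioms 1 and 2, and complete injectivity of the insertions $\epsilon_{A,B}$. Using these, each ``for all inserted processes'' defining equation can be stripped of its trailing insertion (complete injectivity removes a matched insertion while preserving equality of diagrams, exactly as in Appendix A), reducing it to precisely the evaluation identity that characterises the relevant curry.

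Finally, uniqueness in all three cases is inherited for free: once a defining equation has been identified with a currying condition, the CSMC universal property guarantees that the curry is the unique morphism satisfying it. I would therefore organise the argument so that the translation step — defining equation $\Leftrightarrow$ currying condition — is carried out once in a uniform way and then specialised to $d_A$, $T_{AB}$ and $\phi_{ABC}$ in turn, rather than repeating essentially the same diagrammatic manipulation three separate times.
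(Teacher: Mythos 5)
Your proposal is correct and takes essentially the same route as the paper's own proof: the paper also invokes HOPTs $=$ CSMCs and obtains $d_A$ and $\phi_{ABC}$ directly as curries of the right-hand sides of their defining conditions, and $T_{AB}$ by two successive applications of currying, with existence and uniqueness both supplied by the universal property of the curried morphism. The ``translation step'' you anticipate as the main obstacle is in fact vacuous, since the defining conditions in the main text are already single morphism equations of exactly the currying form, so no stripping of quantified insertions via complete injectivity is required.
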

\begin{proof}
Each proof follows by one or more applications of the existence of the curried version of \textit{any} process, guaranteed by the closed monoidal structure of $\mathcal{C}$. Since $\mathcal{C}$ is closed monoidal we know that  for every morphism $f:(A \otimes C) \rightarrow B$ there exists a unique morphism $\bar{f}:C \rightarrow (A \Rightarrow B)$ such that \begin{equation}\tikzfig{figs/evaldef} \quad = \quad \tikzfig{figs/evaldef_2}\end{equation} 
taking $f$ the right hand side of the condition we wish for $d_A$ to satisfy: \begin{equation}\tikzfig{figs/staraut_a}\end{equation} we see that $d_a$ can be taken to be the currying of the right hand side, the existence and uniqueness of such a $d_A$ are guaranteed by the defining condition of a closed monoidal category. The existence of $T_{AB}$ can be demonstrated by two applications of currying, there must exist a unique process $L$ satisfying \begin{equation}\tikzfig{figs/adjoint_exist_a}\end{equation} in turn there must be a unique process satisfying
\begin{equation}\tikzfig{figs/adjont_exist_b}\end{equation} 
together this implies there is a unique process $T$ such that \begin{equation}\tikzfig{figs/adjoint_exist_c}\end{equation}
Finally the defining condition for $\phi$: \begin{equation}\tikzfig{figs/phidefined_a}\end{equation} is again precisely the condition that $\phi$ be the currying of the morphism on the right-hand side of the condition. That such a $\phi$ exists and is unique is then again immediately implied by the closed monoidal structure of $\mathcal{C}$. 
\end{proof}
\section{Enough effects entails injective dualisation}
The following proof is a useful exercise in getting used to working with the dualiser process $d_A$. 
\begin{thm}
If an object $A$ in a HOPT has enough effects, then it has injective dualisation.
\end{thm}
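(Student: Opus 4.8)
The plan is to reduce injective dualisation to the ``enough effects'' condition by showing that the dualising process $d_A$ separates two states precisely when some effect does. Concretely, I would take two states $\rho, \sigma \in \mathcal{C}(I,A)$ satisfying $d_A \circ \rho = d_A \circ \sigma$, and aim to conclude $\rho = \sigma$. Since by hypothesis $A$ has enough effects, it suffices to verify that $e \circ \rho = e \circ \sigma$ for every effect $e \in \mathcal{C}(A,I)$; establishing this will immediately yield $\rho = \sigma$, which is exactly what injective dualisation of $A$ requires.

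The key step is to extract the scalar $e \circ \rho$ from the state $d_A \circ \rho \in \mathcal{C}(I, (A \Rightarrow I) \Rightarrow I)$ using the defining condition of $d_A$ recorded in the main text. That condition states that inserting any static effect $\hat{e}: I \rightarrow (A \Rightarrow I)$ into $d_A$ via the insertion $\epsilon_{(A \Rightarrow I), I}$ reproduces the original dynamic action of $e$ on the wire $A$. Precomposing this equation with $\rho$ then shows that inserting $\hat{e}$ into $d_A \circ \rho$ yields precisely the scalar $e \circ \rho$, and identically with $\sigma$ in place of $\rho$ one obtains $e \circ \sigma$.

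From here the argument closes quickly: for a fixed but arbitrary effect $e$, both scalars $e \circ \rho$ and $e \circ \sigma$ are obtained by inserting the \emph{same} static effect $\hat{e}$ into $d_A \circ \rho$ and $d_A \circ \sigma$ respectively. The hypothesis $d_A \circ \rho = d_A \circ \sigma$ forces these two insertions to coincide, hence $e \circ \rho = e \circ \sigma$. Because $e$ was arbitrary, the ``enough effects'' condition delivers $\rho = \sigma$, establishing that $d_A$ is injective on states.

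I expect the only genuine obstacle to be the bookkeeping in the middle step: correctly reading the defining diagram of $d_A$ to confirm that the composite ``insert $\hat{e}$ into $d_A \circ \rho$'' really equals $e \circ \rho$, rather than some transposed or swapped variant arising from the order in which the wires $A$ and $A \Rightarrow I$ are fed into the insertion. This is a purely diagrammatic manipulation relying on the insertion $\epsilon$ together with the static--dynamic correspondence guaranteed by Axioms 1 and 2, and once the orientation of the defining equation is pinned down the remainder of the proof is routine.
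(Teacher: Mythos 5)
Your proposal is correct and follows the paper's own argument exactly: both proofs take $d_A \circ \rho = d_A \circ \sigma$, use the defining condition of $d_A$ together with the insertion $\epsilon$ to show that inserting any static effect $\hat{e}$ recovers $e \circ \rho = e \circ \sigma$, and then invoke ``enough effects'' to conclude $\rho = \sigma$. The diagrammatic bookkeeping you flag as the only potential obstacle is precisely what the paper's single displayed equation carries out, and it goes through as you describe.
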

\begin{proof}
let $d_A \circ \rho = d_A \circ \sigma$, we will show that $\rho$ must equal $\sigma$. This follows by using the defining properties of the insertion process and the dualising process. For every effect $e$ it follows that: \begin{equation}\tikzfig{figs/effects}\end{equation} and so by enough effects $\rho = \sigma$.
\end{proof}
\section{Proof that $\phi$ is an isomorphism}
\begin{thm}
The process \begin{equation}\phi: C \Rightarrow (A \Rightarrow B) \rightarrow (A \otimes C) \Rightarrow B\end{equation} defined by \begin{equation}\tikzfig{figs/phidefined_a}\end{equation} is an isomorphism
\end{thm}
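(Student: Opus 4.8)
The plan is to produce an explicit two-sided inverse of $\phi$ and to check the two inverse equations by the same device used throughout the paper to compare static (that is, $\Rightarrow$-valued) morphisms: complete injectivity of the insertions. Recall that $\phi$ is, by its defining equation (figure phidefined\_a), the \emph{unique} curried morphism whose dynamic version reproduces the two nested insertions --- first inserting $C$ into the outer type $C\Rightarrow(A\Rightarrow B)$, then inserting $A$ into the resulting $A\Rightarrow B$. Existence and uniqueness of such a curry are guaranteed by the closed monoidal structure (Appendix C), and complete injectivity of $\epsilon_{(A\otimes C),B}$ tells us that any two morphisms with codomain $(A\otimes C)\Rightarrow B$ coincide as soon as their dynamic versions, obtained by inserting $A\otimes C$, agree.

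First I would construct the candidate inverse $\psi:(A\otimes C)\Rightarrow B\to C\Rightarrow(A\Rightarrow B)$. Using the closed structure, $\psi$ is taken to be the curry of the canonical process that receives a static process of type $(A\otimes C)\Rightarrow B$ together with the inputs $C$ and $A$ and returns $B$ via a single insertion of $A\otimes C$; equivalently, $\psi$ may be assembled from the partial-insertion gadget $\Delta$ of Appendix B, which plugs the $C$-wire into a static process and curries the remaining $A$-wire. Either way $\psi$ is a composite of canonical processes and therefore exists. Next I would verify $\phi\circ\psi=\mathrm{id}_{(A\otimes C)\Rightarrow B}$ and $\psi\circ\phi=\mathrm{id}_{C\Rightarrow(A\Rightarrow B)}$. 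Since both sides of each equation are morphisms into a $\Rightarrow$-object, complete injectivity reduces the task to checking equality of the corresponding dynamic diagrams: one inserts $A\otimes C$ in the first equation, and inserts $C$ and then $A$ in the second, and then unfolds the defining equations of $\phi$ and of $\psi$. In each case both sides collapse, once the insertions are applied, onto the very same double-insertion diagram, so the static morphisms must have been equal to begin with.

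The main obstacle is not conceptual but bookkeeping: one must route the tensor factors correctly (including the symmetry relating $C\otimes A$ to $A\otimes C$ demanded by the statement) and peel off the insertions one at a time so that the nested insertions cancel in the right order; complete injectivity is exactly the tool that licenses stripping an $\epsilon$ while preserving equality. I expect no deeper difficulty, because $\phi$ is a curry and currying is an invertible operation at the level of hom-sets. This is made precise by the Yoneda alternative: for every object $W$, post-composition with $\phi$ realises the composite of natural bijections
\[
\mathcal{C}(W,C\Rightarrow(A\Rightarrow B))\cong\mathcal{C}(W\otimes C,A\Rightarrow B)\cong\mathcal{C}(W\otimes C\otimes A,B)\cong\mathcal{C}(W\otimes(A\otimes C),B)\cong\mathcal{C}(W,(A\otimes C)\Rightarrow B),
\]
where the middle isomorphism uses the symmetry and each remaining isomorphism is an instance of the defining currying adjunction of the CSMC. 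As this composite bijection is natural in $W$, the Yoneda lemma forces $\phi$ to be an isomorphism, which is the cleaner route if one prefers to avoid the explicit diagram chase.
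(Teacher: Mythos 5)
Your proposal is correct and follows essentially the same route as the paper's Appendix E proof: there, the inverse is likewise taken to be the currying $\bar{\Delta}$ of the partial-insertion process $\Delta$ from Appendix B, and both composite equations $\phi\circ\bar{\Delta}=\mathrm{id}$ and $\bar{\Delta}\circ\phi=\mathrm{id}$ are verified ``up to insertion'' using complete injectivity, exactly as you describe. Your closing Yoneda argument is also the alternative the paper itself flags in the main text, so nothing in your proposal departs from the paper's reasoning.
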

\begin{proof}
Define the currying $\bar{\Delta}$ of $\Delta$ by 
\begin{equation}\tikzfig{figs/curry_delta_a}\end{equation}
Then using complete injectivity of all insertion morphisms, its is sufficient to check that $\hat{\Delta}$ and $\phi$ are isomorphisms up to insertion. First we check that $\phi \circ \hat{\Delta} = id$
\begin{equation}\tikzfig{figs/phi_iso_1_a}\end{equation}
Then we check that $\hat{\Delta} \circ \phi = id$
\begin{equation}\tikzfig{figs/phi_iso_2_a}\end{equation}
\end{proof}



\section{HOCCs have no correlations with single-state objects}
The notations and terminologies used here are taken from \cite{Kissinger2019AStructure}. 
\begin{thm}
Every HOCC is a HOPT which has no correlations with single-state objects
\end{thm}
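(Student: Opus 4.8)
The plan is to prove the two assertions in turn. That every HOCC is a HOPT is immediate from what precedes: the HOCCs of \cite{Kissinger2019AStructure} are $*$-autonomous, every $*$-autonomous category is in particular closed symmetric monoidal, and by the equivalence HOPTs $=$ CSMCs established above every closed symmetric monoidal category is a HOPT. Hence all the content lies in showing that a HOCC has no correlations with single-state objects, and for this I would work directly inside the construction $\mathbf{Caus}[\mathcal{P}]$.

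Recall that an object of $\mathbf{Caus}[\mathcal{P}]$ is a pair $(A,c_A)$ consisting of an object $A$ of the compact closed raw category $\mathcal{P}$ together with a set of states $c_A \subseteq \mathcal{P}(I,A)$ satisfying the closure condition $c_A = c_A^{**}$, the dual being taken with respect to the state--effect pairing. Since morphisms $I \to (A,c_A)$ are exactly the elements of $c_A$, an object $Y=(Y,c_Y)$ is single-state precisely when $c_Y = \{\pi\}$ for a single state $\pi$, and the tensor product is $(X,c_X)\otimes(Y,c_Y) = (X \otimes Y, (c_X \otimes c_Y)^{**})$. The goal is therefore to show that every $\rho \in (c_X \otimes \{\pi\})^{**}$ is of the form $\rho' \otimes \pi$ with $\rho' \in c_X$.

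The heart of the argument, which generalises Lemma~6.1 of \cite{Kissinger2019AStructure} from first-order to arbitrary $X$, is to exploit the entangled effects available in the dual set $(c_X \otimes \{\pi\})^{*}$. First I would observe that this dual set consists exactly of those $\Phi$ whose $\pi$-evaluation $\Phi\circ(\mathrm{id}_X\otimes\pi)$ lies in $c_X^{*}$; consequently, for any effect $e$ on $X$ and any effect $w$ on $Y$ with $w\circ\pi = 0$, the product effect $e\otimes w$ is the difference of two elements of $(c_X\otimes\{\pi\})^{*}$, so that $\rho$ annihilates it: $(e\otimes w)\circ\rho = 0$ for \emph{every} $e$. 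This forces $(\mathrm{id}_X \otimes w)\circ\rho = 0$ whenever $w\circ\pi = 0$. Bending the $Y$-wire via the compact structure of $\mathcal{P}$ turns $\rho$ into a map $Y^{*}\to X$ that kills the codimension-one subspace of states of $Y^{*}$ on which $\pi$ vanishes; by the linear structure of the hom-sets this map factors through the one-dimensional $\pi$-line, which on bending back is exactly the statement that $\rho = \rho'\otimes\pi$ with $\rho' := (\mathrm{id}_X\otimes\xi_0)\circ\rho$ for a fixed causal effect $\xi_0$ on $Y$. Taking $\xi_0$ to be the discarding effect identifies $\rho'$ with the $X$-marginal, and a short check that $e\otimes\xi_0 \in (c_X\otimes\{\pi\})^{*}$ for every $e\in c_X^{*}$ gives $\rho'\in c_X^{**}=c_X$.

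The step I expect to be most delicate is the passage from the vanishing conditions $(\mathrm{id}_X \otimes w)\circ\rho = 0$ to an honest factorisation of $\rho$ as a morphism of $\mathcal{P}$: this is where one must use that the hom-sets carry enough linear and separation structure for the bent map to factor through the $\pi$-line, a property inherited from the raw category in \cite{Kissinger2019AStructure}. The only genuinely new point relative to \cite{Kissinger2019AStructure} is that $X$ is allowed to be an arbitrary, possibly higher-order, object rather than a first-order system; I expect this to cause no trouble, since the argument quantifies over all effects $e$ on $X$ and never uses first-orderness, which is precisely why the generalisation is \emph{minor}.
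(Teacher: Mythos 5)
Your first half is fine: HOCCs are $*$-autonomous, hence closed symmetric monoidal, hence HOPTs by the equivalence theorem, and your unfolding of the second half inside $\mathbf{Caus}[\mathcal{P}]$ (single-state means $c_Y=\{\pi\}$, joint states are elements of $(c_X\otimes\{\pi\})^{**}$, goal is $\rho=\rho'\otimes\pi$) is also correct. The gap is at the heart of your argument: it presupposes structure that a precausal category $\mathcal{P}$ does not have. First, you need to form the \emph{difference} of two effects, but the dual sets $(\cdot)^{*}$ are by definition subsets of the hom-sets of $\mathcal{P}$, and in the motivating examples ($\mathbf{CPM}$, $\mathbf{Mat}[\mathbb{R}^{+}]$) these hom-sets are cones with no additive inverses. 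Second, and more fatally, you quantify over effects $w:Y\to I$ of $\mathcal{P}$ with $w\circ\pi=0$. You never invoke flatness, but flatness is part of the definition of an object of $\mathbf{Caus}[\mathcal{P}]$, and it forces the unique state $\pi$ of a single-state object to be an invertible scalar multiple of the maximally mixed state; consequently the only effect of $\mathcal{P}$ satisfying $w\circ\pi=0$ is $w=0$ (positivity forces this in $\mathbf{CPM}$ and $\mathbf{Mat}[\mathbb{R}^{+}]$). So the family of $w$'s that your ``codimension-one'' step quantifies over is $\{0\}$, the condition $(\mathrm{id}_X\otimes w)\circ\rho=0$ is vacuous, and the factorisation through the $\pi$-line does not follow. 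The step you flagged as ``most delicate'' is not delicate but unavailable: the linear separation structure you hope to inherit ``from the raw category'' is precisely what precausal categories lack. (A perturbative patch inside the ambient space of Hermitian-preserving maps might rescue the argument for $\mathbf{CPM}$ specifically, but that would not prove the theorem for general HOCCs.)

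The paper's proof takes a different, cone-friendly route that never needs subtraction or annihilating effects: it uses flatness to identify the unique state of $\mathbf{Y}$ with a scalar multiple of the maximally mixed state and to obtain a scalar multiple of discarding inside $C_X^{*}$; it then checks that suitably rescaled first-order \emph{causal} processes $\Psi$ lie in $(C_X\times C_Y)^{*}$, so that $(C_X\times C_Y)^{**}\subseteq\{\tfrac{\mu}{\alpha}\Psi\mid\Psi\ \text{causal}\}^{*}$; finally it invokes the characterisation from \cite{Kissinger2019AStructure} of the dual of the set of causal processes, whose elements decompose as a process in parallel with a discard-like factor, and checks that this decomposition is exactly a state of $\mathbf{X}$ in parallel with the unique state of $\mathbf{Y}$. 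If you want to repair your write-up, replace the difference-of-effects and factorisation steps with this duality argument.
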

\begin{proof}
A general state on $\mathbf{X} \otimes \mathbf{Y}$ is a member of the set $(C_X \times C_Y)^{**}$ where $C_X$ is the set of states on $\mathbf{X}$, $C_Y$ is the set of states on $C_Y$ and $C^{*}$ is the set of effects which normalise elements on $C$, i.e.  $\forall \rho \in c:$ $\pi \circ \rho = 1$. Let $\mathbf{Y}$ be a single-state object, since $\mathbf{Y}$ is flat its unique state must be a scalar multiple of the maximally mixed state.
\begin{equation}
  \tikzfig{figs/tdof0}
 \end{equation}
Since $C_X$ is flat it follows that a scalar multiple of the discard process exists inside $C_X^{*}$. 
\begin{equation}
  \tikzfig{figs/tdof1}
 \end{equation}
The elements of the set $(C_X \times C_Y)^{*}$ are up to process-state duality the processes $M$ in the underlying category such that, 
\begin{equation}
  \tikzfig{figs/tdof2}
 \end{equation}
Note that any first-order causal process $\Psi$ 
\begin{equation}
  \tikzfig{figs/tdof3}
 \end{equation}
which entails that $\frac{\mu}{\alpha} \Psi \in (C_X \times C_Y)^{*}$. In turn since $\{\frac{\mu}{\alpha} \Psi ~|~ \Psi \textrm{ causal }\} \subseteq (C_X \times C_Y)^{*}$ then it follows that $(C_X \times C_Y)^{**} \subseteq \{\frac{\mu}{\alpha} \Psi~|~ \Psi \textrm{ causal }\}^{*}$. For any $w \in \{\frac{\mu}{\alpha} \Psi ~|~ \Psi \textrm{ causal }\}^{*}$ it is immediate that $\frac{\mu}{\alpha}w \in \{\Psi~|~ \Psi \textrm{ causal }\}^{*}$ which in turn implies the following decompositions,
\begin{equation}
  \tikzfig{figs/tdof4}
 \end{equation}
By assumption the usage of an effect of the form $\mathbf{Y \rightarrow I}$ (which will be normalised by the right hand side of the composition) on $w$ produces a state on $\mathbf{X}$. This in turn confirms that the left hand side of the decomposition is indeed a state of $\mathbf{X}$, and so any $w \in (C_X \times C_Y)^{**} \subseteq \{\frac{\mu}{\alpha} \Psi ~|~ \Psi \textrm{ causal }\}^{*}$ must decompose as the unique state of $\mathbf{Y}$ in parallel with a state of $\mathbf{X}$.
\end{proof}

\section{Tensor product processes vs bipartite processes}

\begin{thm}
Let $\mathcal{C}$ be a deterministic HOPT with no interaction with single-state objects. If $A$ and $B$ are causal and 
\begin{flalign}
\tikzfig{figs/tensor} : (A \Rightarrow B) \otimes (B \Rightarrow A) \rightarrow (A \otimes B) \Rightarrow (B \otimes A)
 \end{flalign}
is an isomorphism, then $A$ and $B$ are single-state objects. 
\end{thm}
\begin{proof}
We show that there exists some $\kappa': I \rightarrow A$ such that for every $\rho: I \rightarrow A$ then $\rho = \kappa'$, meaning that there can only be one state of type $I \rightarrow A$ implying that $A$ be a single-state object. Indeed for every $\rho$:
\begin{equation}
    \tikzfig{figs/trivial1_a}
 \end{equation}
it follows that every state on $A$ is equal to $\kappa'$ and so $A$ is a single-state object. Almost identical steps can be used to produce the same result for $B$.
\end{proof}

\section{Adjoint dynamics and double duals}

\begin{thm}
Let $\mathcal{C}$ be a HOPT, the following statements are equivalent. 
\begin{itemize}
    \item Every $B \in o(\mathcal{C})$ is canonically equivalent to its double dual in $\mathcal{C}$
    \item For all $A,B \in o(\mathcal{C})$ the HOPT $\mathcal{C}$ has adjoint dynamics between $A$ and $B$e 
    \item For all $B \in o(\mathcal{C})$ the HOPT $\mathcal{C}$ has adjoint dynamics between $I$ and $B$
\end{itemize}
\end{thm}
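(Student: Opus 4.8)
The plan is to prove the three conditions equivalent by establishing the cycle $(\mathrm{i})\Rightarrow(\mathrm{ii})\Rightarrow(\mathrm{iii})\Rightarrow(\mathrm{i})$, where $(\mathrm{i})$ is adjoint dynamics between all pairs $A,B$, $(\mathrm{ii})$ is adjoint dynamics between $I$ and every $B$, and $(\mathrm{iii})$ is that every $d_B$ is an isomorphism. Throughout I write $A^{\ast} := A \Rightarrow I$ for the dual object, so that $A^{\ast\ast} = A^{\ast}\Rightarrow I$ and $d_A\colon A \to A^{\ast\ast}$. The implication $(\mathrm{i})\Rightarrow(\mathrm{ii})$ is immediate: specialising $A = I$ in the statement ``$T_{AB}$ is an isomorphism for all $A,B$'' directly yields ``$T_{IB}$ is an isomorphism for all $B$''.

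For $(\mathrm{ii})\Rightarrow(\mathrm{iii})$ the key observation is that $T_{IB}$ \emph{is} $d_B$ once one identifies domains and codomains through the canonical isomorphisms guaranteed by the HOPT axioms. Concretely, $T_{IB}\colon (I\Rightarrow B) \to \big[(B\Rightarrow I)\Rightarrow (I\Rightarrow I)\big]$, and I would use $\epsilon_{I,B}\colon (I\Rightarrow B)\to B$ (an isomorphism, by the HOPT axiom) on the domain together with the isomorphism $\mathrm{id}\Rightarrow \epsilon_{I,I}\colon \big[(B\Rightarrow I)\Rightarrow(I\Rightarrow I)\big]\to \big[(B\Rightarrow I)\Rightarrow I\big]$ on the codomain, the latter being an isomorphism by the Appendix A fact that $f\Rightarrow g$ is an isomorphism whenever $f,g$ are. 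I would then check that the composite $(\mathrm{id}\Rightarrow\epsilon_{I,I})\circ T_{IB}\circ \epsilon_{I,B}^{-1}\colon B \to \big[(B\Rightarrow I)\Rightarrow I\big]$ satisfies the defining equation of $d_B$, and invoke the uniqueness of $d_B$ (Appendix C) to conclude it equals $d_B$. Since this composite differs from $T_{IB}$ only by flanking isomorphisms, $T_{IB}$ is an isomorphism precisely when $d_B$ is, giving $(\mathrm{ii})\Rightarrow(\mathrm{iii})$ (and in fact the converse as well).

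For $(\mathrm{iii})\Rightarrow(\mathrm{i})$ I would exhibit an explicit isomorphism $(A\Rightarrow B)\cong (B^{\ast}\Rightarrow A^{\ast})$ as the composite
\[
(A\Rightarrow B)\xrightarrow{\mathrm{id}_A\Rightarrow d_B}(A\Rightarrow B^{\ast\ast})\xrightarrow{\ \phi\ }\big((A\otimes B^{\ast})\Rightarrow I\big)\xrightarrow{\sigma\Rightarrow\mathrm{id}_I}\big((B^{\ast}\otimes A)\Rightarrow I\big)\xrightarrow{\ \phi^{-1}\ }(B^{\ast}\Rightarrow A^{\ast}),
\]
where $\sigma$ is the symmetry (swap) and $\phi$ is the static-currying isomorphism of the main text. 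Each arrow is an isomorphism: the first because $d_B$ is one by $(\mathrm{iii})$ and $f\Rightarrow g$ preserves isomorphisms, the middle two by the symmetric monoidal and $\phi$-isomorphism structure already established, and the last again by $\phi$ being an isomorphism. It then remains to show that this composite coincides with $T_{AB}$; since $T_{AB}$ is the \emph{unique} morphism satisfying its defining equation (Appendix C), it suffices to verify that the composite obeys that same equation, after which $T_{AB}$ inherits the isomorphism property and $(\mathrm{i})$ follows.

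The main obstacle I anticipate is precisely this last verification: confirming diagrammatically that the assembled composite satisfies the defining condition of $T_{AB}$. This is a routine but somewhat involved string-diagram manipulation in which the defining equations of $d_B$, $\phi$, and the insertions $\epsilon$ must be combined, with complete injectivity of the insertions used to strip off evaluations while preserving equalities. An alternative that avoids uniqueness would be to compute $T_{B^{\ast}A^{\ast}}\circ T_{AB}$ and show, via the naturality of the dualiser ($d_B\circ f = f^{\ast\ast}\circ d_A$ internalised), that it equals $d_A^{-1}\Rightarrow d_B$; this exhibits a one-sided inverse of $T_{AB}$ directly, but handling the proliferation of starred objects makes the uniqueness route cleaner to present.
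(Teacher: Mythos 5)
Your proposal is correct and follows essentially the same route as the paper's Appendix H: the same cycle of three implications, with the specialization step, the identification of $d_B$ with $T_{IB}$ conjugated by canonical isomorphisms ($\epsilon_{I,B}$ and $\mathrm{id}\Rightarrow\epsilon_{I,I}$), and the decomposition of $T_{AB}$ through $\mathrm{id}_A\Rightarrow d_B$ and structural isomorphisms, each identified with the canonical morphism via its defining equation and uniqueness. The deferred string-diagram verifications you flag are exactly the computations the paper carries out in its figures, so there is no gap in substance.
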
 
\begin{proof}
To show that the first statement implies the second, we note that each $T_{AB}$ may be written in the following form \begin{equation}
    \tikzfig{figs/adjoint_2}
\end{equation} which is easily demonstrated by showing that the rhs indeed satisfies the defining condition for $T$ \begin{equation}
    \tikzfig{figs/adjoint_3_a}
\end{equation}
From the above decomposition of $T_{AB}$ it follows that whenever $d_B$ is an isomorphism then $T_{AB}$ is an isomorphism. The third statement immediately follows as a subcase of the second. To demonstrate that the third statement implies the first we note that each $d_B$ can be written in terms of $T_{IB}$ and a pair of isomorphisms, \begin{equation}
    \tikzfig{figs/adjoint_4}
\end{equation}
again demonstrated by showing that the rhs satisfies the defining equation for $d_B$ \begin{equation}
    \tikzfig{figs/adjoint5_a}
\end{equation}
\end{proof}

\section{Lifting isomorphism with double dual}
In this section we will use the notation $f \Rightarrow g$ to mean the supermap which pre-composes with $f$ and post-composes with $g$, the formal definition of $f \Rightarrow g$ is given in Appendix A. We will furthermore regularly use the notations $A \Rightarrow g$ and $f \Rightarrow B$ as shorthand for $id_{A} \Rightarrow g$ and $f \Rightarrow id_{B}$ respectively.
\begin{thm}[Lifted double duals]
Let $\mathcal{C}$ be any HOPT, if $\otimes$ preserves equivalence with double duals then $ \Rightarrow $ preserves equivalence with double duals.
\end{thm}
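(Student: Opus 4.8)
The plan is to reduce reflexivity of $A \Rightarrow B$ to reflexivity of a tensor product, so that the hypothesis on $\otimes$ can be brought to bear. Writing $X^{*} := X \Rightarrow I$ for the dual, I would first establish the isomorphism
\begin{equation}
(A \Rightarrow B) ~\cong~ (A \otimes B^{*}) \Rightarrow I ~=~ (A \otimes B^{*})^{*}.
\end{equation}
This is built as a composite of two isomorphisms: the bifunctor action $A \Rightarrow d_B$ gives $(A \Rightarrow B) \cong A \Rightarrow (B^{*} \Rightarrow I)$, an isomorphism because $d_B$ is an isomorphism and $A \Rightarrow (-)$ preserves isomorphisms by Appendix A; and then the static currying $\phi$ of Appendix E gives $A \Rightarrow (B^{*} \Rightarrow I) \cong (A \otimes B^{*}) \Rightarrow I$. (Equivalently one may route through the lifting process $T_{AB}$, which is an isomorphism once $d_B$ is, followed by $\phi$.) Thus reflexivity of $A \Rightarrow B$ will follow once I know that $(A \otimes B^{*})^{*}$ is canonically equivalent to its double dual and that this property transfers across the displayed isomorphism.

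Two general facts about the dualising process carry the argument, and both follow from the defining universal property of $d$ in Appendix C. First, \emph{the dual of a reflexive object is reflexive}: if $d_X$ is an isomorphism then so is $d_{X^{*}}$. I would prove this from the triangle identity $(d_X)^{*} \circ d_{X^{*}} = \mathrm{id}_{X^{*}}$ relating $d_{X^{*}}$ to the dual of $d_X$; since $d_X$ being an isomorphism makes $(d_X)^{*}$ an isomorphism, this identity exhibits $d_{X^{*}}$ as its inverse. Second, \emph{reflexivity transfers along isomorphisms}: if $f : X \to Y$ is an isomorphism and $d_X$ is an isomorphism, then so is $d_Y$. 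This is immediate from naturality of $d$, which gives $d_Y = f^{**} \circ d_X \circ f^{-1}$ with $f^{**} = (f^{*})^{*}$ an isomorphism (the bifunctor $\Rightarrow$ preserving isomorphisms, Appendix A).

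With these in hand the assembly is short. By hypothesis $d_A$ and $d_B$ are isomorphisms; the first fact gives that $d_{B^{*}}$ is an isomorphism, and then the hypothesis that $\otimes$ preserves equivalence of double duals gives that $d_{A \otimes B^{*}}$ is an isomorphism. Applying the first fact once more yields that $d_{(A \otimes B^{*})^{*}}$ is an isomorphism, and finally the second fact transfers this along the isomorphism $(A \Rightarrow B) \cong (A \otimes B^{*})^{*}$ to conclude that $d_{A \Rightarrow B}$ is an isomorphism, i.e.\ that $\Rightarrow$ preserves equivalence of double duals. The main obstacle is establishing the two general facts about $d$ entirely within the graphical calculus: the paper records the existence and defining condition of $d_A$ (Appendix C) but neither its naturality nor the triangle identity, so each would require a short diagrammatic verification from the defining equation for $d_A$ — the triangle identity being the delicate point, as it is precisely the zig-zag law underlying self-duality. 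Everything else is coherent bookkeeping of the canonical isomorphisms (unitors, $\phi$, and functoriality of $\Rightarrow$).
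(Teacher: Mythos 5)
Your proposal is correct and takes essentially the same route as the paper's Appendix I: the same isomorphism $(A \Rightarrow B) \cong (A \otimes (B \Rightarrow I)) \Rightarrow I$ (the paper's $m$, built from $\phi$ and the static form of $d_B$), the same triangle-identity argument showing $d_{X \Rightarrow I}$ is invertible whenever $d_X$ is (the paper's computation that $d_B \Rightarrow \mathrm{id}$ is a one-sided inverse of $d_{B \Rightarrow I}$), and the same transfer of reflexivity along $m$ via the naturality square (the paper's explicit decomposition of $d_{A \Rightarrow B}$ in terms of $m$, $d_{(A \otimes (B \Rightarrow I)) \Rightarrow I}$ and $(m \Rightarrow I) \Rightarrow I$). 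You also correctly identify the point the paper itself stresses — that exhibiting some isomorphism with the double dual is not enough, and that the diagrammatic verifications of the triangle identity and of naturality of $d$ are precisely the substantive steps.
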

\begin{proof}
We first give a sketch proof, outlining the sequence of internal isomorphisms used to show that $((A \Rightarrow B) \Rightarrow I) \Rightarrow I \cong (A \Rightarrow B)$, we then expand on this demonstrating that the above isomorphism is actually witnessed by $d_{A\Rightarrow B}$. Firstly assuming $d_A$ and $d_B$ are isomorphisms then $d_B \Rightarrow I$ is an isomorphism since the contravariant functor $(- \Rightarrow I)$ preserves isomorphisms. Furthermore since $\otimes$ preserves equivalence with double duals  \begin{equation}(A \otimes (B \Rightarrow I)) \cong ((A \otimes (B \Rightarrow I)) \Rightarrow I) \Rightarrow I \end{equation}
Again using that $d_B$ is an isomorphism gives
\begin{flalign} 
(A \Rightarrow B) \cong (A \Rightarrow ((B \Rightarrow I) \Rightarrow I)) \cong (A \otimes (B \Rightarrow I))\Rightarrow I
 \end{flalign}
again since the contravariant functor $(- \Rightarrow I)$ preserves isomorphisms this implies,
\begin{flalign}
((A \Rightarrow B) \Rightarrow I) \Rightarrow I \cong (((A \otimes (B \Rightarrow I))\Rightarrow I) \Rightarrow I) \Rightarrow I) 
 \end{flalign}
the right hand side can be simplified using the first point on $\otimes$. 
\begin{flalign}
    ((A \Rightarrow B) \Rightarrow I) \Rightarrow I \cong (A \otimes (B \Rightarrow I))\Rightarrow I \cong A \Rightarrow B
 \end{flalign}
So there indeed exists an isomorphism of the form required, to move beyond a sketch proof it must be shown that this isomorphism is in fact $d_{A \Rightarrow B}$. Using $\phi$ and the invertible (by assumption) canonical morphism $d : B \rightarrow (B \Rightarrow I) \Rightarrow I$ in its static form $\hat{d}_{B}: I \rightarrow (B \Rightarrow ((B \Rightarrow I) \Rightarrow I))$ an invertible morphism $m$ can be built.
\begin{equation}
  \tikzfig{figs/can4}
 \end{equation}
$d_{A \Rightarrow B}$ can be expressed in terms of $m$ and $d_{(A \otimes (B \Rightarrow I)) \Rightarrow I}$ in the following way,
\begin{equation}
  \tikzfig{figs/can7}
 \end{equation}
Where since $m$ is an isomorphism $(m \Rightarrow I)$ and $(m \Rightarrow I) \Rightarrow I$ are isomorphisms too.
\begin{equation}
  \tikzfig{figs/can8}
 \end{equation}
The proofs of the identities used above can be found in Appendix A. The proof that $d_{A \Rightarrow B}$ decomposes as above is then given as follows.
\begin{equation}
  \tikzfig{figs/can5_a}
 \end{equation}
By assumptions $d_A$ and $d_B$ are isomorphisms, so $d_{B} \Rightarrow id$ is an isomorphism. It can be shown that $d_{B} \Rightarrow id$ is always the the right inverse of $d_{B \Rightarrow I}$ since first by expanding the definition of $d_{B} \Rightarrow I$ \begin{equation}
  \tikzfig{figs/can2a}
 \end{equation}
 and then using the definition of any canonical morphism $d_X$ twice.
 \begin{equation}
  \tikzfig{figs/can2b_a}
 \end{equation}
Since $d_{B} \Rightarrow id$ is an isomorphism and $d_{B} \Rightarrow id$ is a right inverse for $d_{B \Rightarrow I}$, it follows that $d_{B \Rightarrow I}$ must be an isomorphism. Since $\otimes$ preserves isomorphism with double dual $d_{A \otimes (B \Rightarrow I)}$ must be an isomorphism and by the same reasoning as for $B$ it follows that $d_{(A \otimes (B \Rightarrow I)) \Rightarrow I}$ is an isomorphism. This completes the proof that every part of the given decomposition of $d_{A \Rightarrow B}$ is then an isomorphism, entailing that $d_{A \Rightarrow B}$ itself must also be an isomorphism.
\end{proof}

\section{Wires with no-signalling states}
\begin{thm}
Let $\mathcal{C}$ be a deterministic HOPT with no correlations with single-state objects, then if
\begin{itemize}
    \item $\otimes$ preserves equivalence with double duals
    \item $A$ and $A'$ each have enough states and are canonically equivalent to their double duals
\end{itemize}
then the object $(A \Rightarrow A')$ has no-signalling states.
\end{thm}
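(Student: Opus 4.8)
The plan is to reduce the statement to the earlier \emph{Non-signalling processes} theorem, after first obtaining a complete description of the effects on the object $A \Rightarrow A'$. Since $A$ and $A'$ have enough states in a deterministic theory, the result recorded just before Section~4.2 makes them causal, so each carries a unique effect; I will use causality throughout and recover the relevant scalar identities from determinism. The target is the existence, for a fixed bipartite state $m : I \to (A \Rightarrow A') \otimes X$, of a single marginal $m' : I \to X$ reproduced by \emph{every} effect $\Pi : (A \Rightarrow A') \to I$, so the crux is to pin down what all such $\Pi$ look like.

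First I would characterise the effects on $A \Rightarrow A'$ as exactly the \emph{feed-and-discard} maps $\Pi_{\rho} := \pi_{A'} \circ \epsilon_{A,A'} \circ (\mathrm{id}_{A\Rightarrow A'} \otimes \rho)$, where $\rho : I \to A$ is a state and $\pi_{A'}$ is the unique effect on $A'$. Passing to the static picture, effects on $A \Rightarrow A'$ are states of $(A \Rightarrow A') \Rightarrow I$. Using the static-currying isomorphism $\phi$ together with the equivalence of $A'$ with its double dual, one obtains $A \Rightarrow A' \cong (A \otimes (A' \Rightarrow I)) \Rightarrow I$, and hence $(A \Rightarrow A') \Rightarrow I \cong ((A \otimes (A' \Rightarrow I)) \Rightarrow I) \Rightarrow I$. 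By the \emph{Lifting canonical isomorphisms} theorem (Appendix~I), the assumption that $\otimes$ preserves equivalence of double duals forces $\Rightarrow$ to preserve it as well; since $A'$ and $I$ are equivalent to their double duals, so is $A' \Rightarrow I$, and then so is $A \otimes (A' \Rightarrow I)$. This collapses the iterated dual to give $(A \Rightarrow A') \Rightarrow I \cong A \otimes (A' \Rightarrow I)$. Now $A'$ causal makes $\mathcal{C}(I, A' \Rightarrow I) \cong \mathcal{C}(A', I)$ a singleton, so $A' \Rightarrow I$ is a single-state object, and the hypothesis of no correlations with single-state objects forces every state of $A \otimes (A' \Rightarrow I)$ to factor as $\rho \otimes \pi_0$ with $\pi_0$ the unique state of $A' \Rightarrow I$. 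Tracing this factored state back through the isomorphisms identifies the corresponding effect as precisely $\Pi_{\rho}$, with $\pi_0$ being the static form of the discarding effect $\pi_{A'}$; hence every effect on $A \Rightarrow A'$ is some $\Pi_{\rho}$.

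With this in hand the computation is short. Let $\mu : A \to A' \otimes X$ be the process built from $m$ exactly as in the \emph{Non-signalling processes} theorem, which guarantees that $\mu$ is non-signalling from $A$ to $X$: for the unique effect $\pi_{A'}$ there is a fixed state $m' : I \to X$ with $(\pi_{A'} \otimes \mathrm{id}_X)\circ \mu = m' \circ \pi_A$, where $\pi_A$ is the unique effect on $A$. Unfolding the definitions, $(\Pi_{\rho}\otimes \mathrm{id}_X)\circ m = (\pi_{A'}\otimes \mathrm{id}_X)\circ \mu \circ \rho = m' \circ \pi_A \circ \rho$, and since $\pi_A \circ \rho$ is a scalar in a deterministic theory it equals $1$, so the marginal on $X$ is $m'$ for every choice of $\rho$. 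As every effect is of the form $\Pi_{\rho}$, the object $A \Rightarrow A'$ has no-signalling states witnessed by $m'$.

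The main obstacle I anticipate lies in the characterisation step: verifying that the abstract isomorphism $(A \Rightarrow A') \Rightarrow I \cong A \otimes (A' \Rightarrow I)$, combined with the single-state factorisation, genuinely carries the factored state $\rho \otimes \pi_0$ to the concrete effect $\Pi_{\rho}$, rather than merely supplying a set-theoretic bijection between effects and states of $A$. This identification is a diagram chase through $\phi$ and the dualisers $d_{(-)}$, and it is exactly what upgrades the earlier ``signalling from $A$'' restriction to ``signalling from all of $A \Rightarrow A'$''; the final computation is then routine.
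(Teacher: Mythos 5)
Your proposal follows essentially the same route as the paper's Appendix J proof: both characterise every effect on $A \Rightarrow A'$ as a feed-a-state-and-discard map $\Pi_\rho$ via the isomorphism $(A \Rightarrow A') \Rightarrow I \cong A \otimes (A' \Rightarrow I)$ together with no-correlations applied to the single-state object $A' \Rightarrow I$, and then eliminate the remaining discard by a second application of no-correlations (which you obtain by citing the earlier Non-signalling processes theorem, where the paper inlines that same argument). The diagram chase you flag as the main obstacle --- verifying that the isomorphism really carries the factored state $\rho \otimes \pi_0$ to the concrete effect $\Pi_\rho$ --- is precisely the technical bulk of the paper's proof (carried out there with the identities for $\phi$, $d_{(-)}$ and the composition supermaps), and it goes through as you anticipate.
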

\begin{proof}
We first show that every effect $\Pi : (A \Rightarrow A') \rightarrow I$ can be written as an application of a discard effect and an insertion of a state. This is a consequence of the isomorphism $A \otimes (A' \Rightarrow I) \cong (A \Rightarrow A') \Rightarrow I$ constructed by the following morphisms.
\begin{equation}
  \tikzfig{figs/ignoredefine}
 \end{equation}
Indeed one can show the following identity
\begin{equation}
    \tikzfig{figs/ignoridentity_a}
 \end{equation}
Using the general formula
\begin{equation}
  \tikzfig{figs/ignoremovearrow_a}
 \end{equation}
twice. 
\begin{equation}
  \tikzfig{figs/ignorederive1_a}
 \end{equation}
Then using the defining property of $d$,
\begin{equation}
  \tikzfig{figs/ignorederive2_a}
 \end{equation}
and the natural isomorphism $\phi$,
 \begin{equation}
  \tikzfig{figs/ignorederive3_a}
 \end{equation}
and the defining identity of the sequential composition supermap twice we reach 
\begin{equation}
  \tikzfig{figs/ignorederive3_b}
 \end{equation}
With this identity in mind we note that for \textit{every} effect $\Pi: (A \Rightarrow A') \rightarrow I$
\begin{equation}
  \tikzfig{figs/ignorederive4_a}
 \end{equation}
we then use the property of no correlations with single-state objects on the state highlighted on the bottom left,
\begin{equation}
  \tikzfig{figs/ignorederive4_b}
 \end{equation}
 to reach
 \begin{equation}
  \tikzfig{figs/ignorederive5_a}
 \end{equation}
This time we use no correlations with single-state objects on the bipartite state highlighted on the bottom right, 
 \begin{equation}
  \tikzfig{figs/ignorederive5_b}
 \end{equation}
this finally entails that there exists some state $f'$ such that for every effect $\Pi$.
\begin{equation}
  \tikzfig{figs/ignore_derive_6_a}
 \end{equation}
which is precisely the statement that $A \Rightarrow A'$ has no-signalling states. 

\end{proof}

\end{document}